\title{Convex scalarizations of the mean-variance-skewness-kurtosis problem in portfolio selection}

\author{Andries Steenkamp
\thanks{Centrum Wiskunde \& Informatica (CWI), Amsterdam. \url{andries.steenkamp@cwi.nl}
\newline
This work is supported by the European Union's Framework Programme for Research and Innovation Horizon
2020 under the Marie Skłodowska-Curie Actions Grant Agreement No. 813211  (POEMA).
}}

\date{\today}

\documentclass[11pt]{article}

\usepackage{fullpage}
\usepackage[utf8]{inputenc}
\usepackage[english]{babel}
\usepackage{amsmath}
\usepackage{amsthm}
\usepackage{amssymb}
\usepackage{blkarray}
\usepackage{mathtools}
\usepackage{mathrsfs}
\usepackage{xcolor}

\usepackage{placeins}

\usepackage[bookmarksnumbered,hypertexnames=false,colorlinks=false,linkcolor=blue,urlcolor=blue,citecolor=blue,anchorcolor=green,breaklinks=true,pdfusetitle]{hyperref}
\usepackage[capitalize]{cleveref}

\usepackage{graphicx}
\usepackage[lofdepth,lotdepth]{subfig}

\date{\today}

\newtheorem{theorem}{Theorem}

\newtheorem{lemma}[theorem]{Lemma}
\newtheorem{proposition}[theorem]{Proposition}
\newtheorem{corollary}[theorem]{Corollary}

\newtheorem{observation}[theorem]{Observation}

\newcommand{\bb}[1]{\mathbb{#1}} 
\newcommand{\R}{\bb{R}}

\newcommand{\inp}[1]{\langle #1 \rangle}
\newcommand{\qo}[1]{ \overline{#1} }         
\newcommand{\qu}[1]{ \underline{#1} }         
\newcommand{\argmin}{\mathrm{argmin}}
\newcommand{\supp}{\mathrm{supp}}
\newcommand{\corr}{\mathrm{corr}}
\newcommand{\wtl}{\widetilde }
\newcommand{\wh}{\widehat }

\newcommand{\proj}{\mathbf{\mathrm{Proj}}}

\newcommand{\vol}{\mathrm{vol}}

\newcommand{\qua}{\Phi_{\lambda}(R,w)}
\newcommand{\quay}{\Psi_{\lambda}(y)}

\newcommand{\Rtri}{\qo{R}_{\Delta}}
\newcommand{\Rtriu}{\qu{R}_{\Delta}}

\newcommand{\Rsqr}{\qo{R}_{\square}}
\newcommand{\Rsqru}{\qu{R}_{\square}}

\newcommand{\Fd}[1]{F_{#1,\Delta}^{[40]}}
\newcommand{\Fid}{\Fd{i}}
\newcommand{\Fstar}[1]{F_{\Delta}^{[40],#1}}
\newcommand{\Fdk}[1]{F_{#1,\Delta,k}^{[40]}}
\newcommand{\Fdfive}[1]{F_{#1,\Delta,5}^{[40]}}
\newcommand{\Fkstar}[1]{F_{\Delta,k}^{[40],#1}}
\newcommand{\Ffivestar}[1]{F_{\Delta,5}^{[40],#1}}

\newcommand{\Fs}[1]{F_{#1,\square}^{[40]}}
\newcommand{\Fis}{\Fs{i}}
\newcommand{\Fstars}[1]{F_{\square}^{[40],#1}}

\newcommand{\Fsk}[1]{F_{#1,\square,k}^{[40]}}
\newcommand{\Fsfive}[1]{F_{#1,\square,5}^{[40]}}
\newcommand{\Fsstar}[1]{F_{\square,k}^{[40],#1}}
\newcommand{\Ffivestars}[1]{F_{\square,5}^{[40],#1}}

\newcommand{\Wd}{W_{\Delta}^{[40]}}
\newcommand{\Wds}[1]{W_{\Delta}^{[40],#1}}
\newcommand{\Wdk}{W_{\Delta,k}^{[40]}}
\newcommand{\Wdkfive}{W_{\Delta,5}^{[40]}}
\newcommand{\Wdks}[1]{W_{\Delta,k}^{[40],#1}}
\newcommand{\Wdkfives}[1]{W_{\Delta,5}^{[40],#1}}

\newcommand{\Ws}{W_{\square}^{[40]}}
\newcommand{\Wss}[1]{W_{\square}^{[40],#1}}
\newcommand{\Wsk}{W_{\square,k}^{[40]}}
\newcommand{\Wskfive}{W_{\square,5}^{[40]}}
\newcommand{\Wsks}[1]{W_{\square,k}^{[40],#1}}
\newcommand{\Wskfives}[1]{W_{\square,5}^{[40],#1}}

\begin{document}
\maketitle

\begin{abstract}
\noindent  We consider the multi-objective mean-variance-skewness-kurtosis (MVSK) problem in portfolio selection, with and without shorting and leverage.
Additionally, we define a sparse variant of MVSK where feasible portfolios have supports contained in a chosen class of sets.
To find the MVSK problem's Pareto front, we linearly scalarize the four objectives of MVSK into a scalar-valued degree four polynomial $F_\lambda$ depending on some hyper-parameter $\lambda \in \Delta^4$.
As one of our main results, we identify a set of hyper-parameters for which $F_\lambda$ is convex over the probability simplex (or over the cube).
By exploiting the convexity and neatness of the scalarization, we can compute part of the Pareto front.
We compute an optimizer of the scalarization $F_\lambda$ for each $\lambda$ in a grid sampling of $\Delta^4$. 
To see each optimizer's quality, we plot scaled portfolio objective values against hyper-parameters. 
Doing so, we reveal a sub-set of optimizers that provide a superior trade-off among the four objectives in MVSK. 
\end{abstract}

\section{Introduction} \label{sec_intro}
We gently introduce the reader to the well-studied portfolio selection problem in finance. We progressively extend the model to include higher-order moments, shorting and leveraged positions, and sparsity.
 We explore some results of multi-objective optimization to apply the results later in \cref{sec_model}.

\subsection{The portfolio selection problem}
In finance, \emph{portfolio selection} is the task of selecting from a pool of available assets a weighted subset, called a \emph{portfolio}, in such a way that the portfolio maximizes return on investment while minimizing the risk of loss of capital. In 1952  Harry Markowitz \cite{marko93748} systematized portfolio selection by phrasing it as a constrained quadratic optimization problem involving data on past returns. Markowitz modeled the portfolio's profitability by the \emph{mean returns}, and he modeled risk by using the \emph{variance} as a proxy. Hence the model is a bi-objective optimization problem over the simplex:
\begin{equation} \label{Markowitz}
\begin{split} 
&\max ~ w^TM  \\
&\min ~  w^TVw \\
&\mathrm{s.t.}~w \in \Delta^n,
\end{split} 
\end{equation}
where $\Delta^n$ is the standard simplex, $V \in \R^{n\times n}$ is a covariance matrix, and $M \in \R^{n}$ is the vector of means.
Here $n$ denotes the number of assets that are available for selection and,  for each $i \in [n]$, $w_i$ denotes the weight of the i$^{\mathrm{th}}$ asset in the portfolio $w$.
Problem (\ref{Markowitz}) can be converted into a single-objective optimization problem of the following form
\begin{equation} \label{Markowitz_scal}
\begin{split} 
&\min ~(1-\lambda)w^TVw - \lambda w^TM \\
&\mathrm{s.t.}~w \in \Delta^n,
\end{split} 
\end{equation}
for some hyper-parameter $\lambda \in [0,1]$ modelling the investors risk tolerance.
 Hence, there are two conflicting objectives, maximizing the mean returns $w^TM$ and minimizing the variance $w^TVw$ in the returns. Though the model may seem crude by modern standards, it began the field of portfolio optimization, see \cite{bestPortfolioOptimization2010}. Variants of this model are still used and studied, see, for example, \cite{Lai2006MeanVarianceSkewnessKurtosisbasedPO, maringerGlobalOptimizationHigher2009, RePEc:com:wpaper:021}. We will consider an extension of (\ref{Markowitz}) before stating our contribution to the field.

\subsection{Extending to higher order moments}
The Markowitz model is often called the \emph{mean-variance framework}, as it only uses the mean and the variance in describing the problem. In statistics, the mean and variance are, respectively, called the first and second moments of the data. Using only the first two moments, the Markowitz model implicitly assumes that the data comes from a Gaussian distribution, however, this is not the case in practice, see \cite{SAMUELSON1975215, 10.2307/2331046}. Assuming the data is Gaussian, one underestimates the frequency of extreme events, like rare but significant losses. To account for this underestimation, several authors propose extending the model to include higher order moments like \emph{skewness} and \emph{kurtosis}, see \cite{krausSkewnessPreferenceValuation1976, Zhou2021SolvingHP, Mhiri2010InternationalPO, RePEc:com:wpaper:021, Shengzhi2011SemidefinitePR}.

Skewness, the third data moment, represents the asymmetric characteristics of a distribution. One can think of a Gaussian leaning in a particular direction, the skewness quantifying the direction and intensity of the leaning.
Positive skewness implies higher chances of occasional rare high returns, while
negative skewness implies the potential risk of occasional significant losses, see \cref{skew}.\\

\begin{figure}[ht]
\centering
	\includegraphics[scale=0.35]{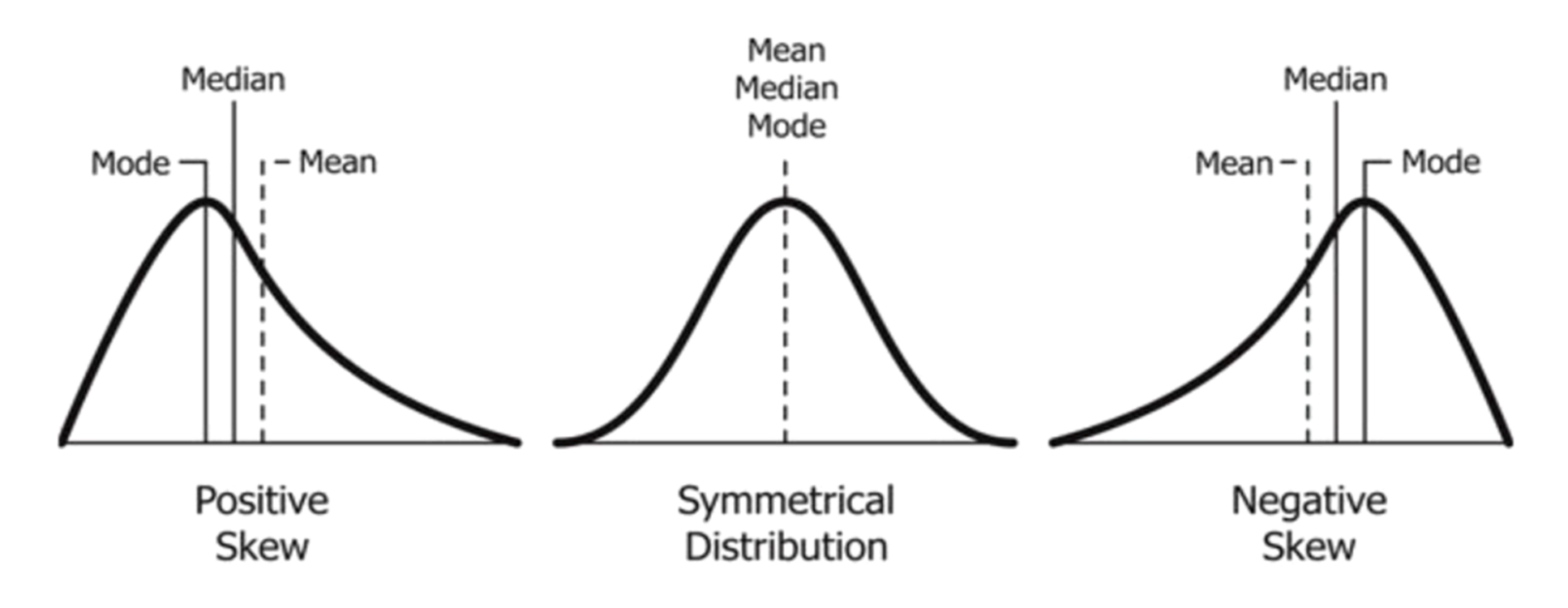}
	\caption{Diagram of the probability density function of a Gaussian with skewness, taken from \cite{doaneMeasuringSkewnessForgotten2011}.} 
	\label{skew}
\end{figure}

Kurtosis, the fourth data moment, is similar to variance in that larger values correspond to a sharper peak and fatter tails, i.e., more extreme returns on either side of the mean, compared with the normal distribution, see \cref{kurt}.\\

Alternatively, some researchers use notions of entropy and mutual information instead of correlation to quantify risk \cite{GONCALVES2022200101}. 

\begin{figure}[ht]
\centering
	\includegraphics[scale=0.8]{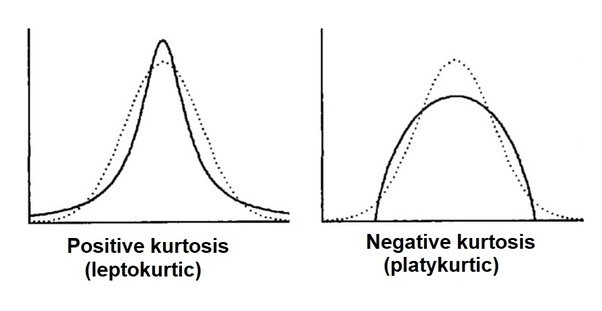}
	\caption{Diagram of distributions with positive and negative kurtosis compared to the normal distribution (dashed line), taken from \cite{DeCarlo1997OnTM}.} 
	\label{kurt}
\end{figure}

Most investors would prefer a large positive skewness and a small kurtosis if given a choice. Adding these new terms improves the model's expressiveness at the cost of adding more complexity. Skewness, in particular, is likely non-convex. We will elaborate more on the objective functions in \cref{subsec_MVSK}.\\

This extended model is called the \emph{mean-variance-skewness-kurtosis} (MVSK) problem. We give the mathematical description in \cref{subsec_MVSK}. It is a \emph{multi-objective optimization problem} (MOOP) with the first four moment functions as objectives, see (\ref{mean_fun}) and (\ref{mom_funs}) for the formal definitions. Solving the MVSK and some of its variants will be our primary task in this paper. We mention extensions to even higher-order moments in \cref{sec_conc}.

\subsection{Including leverage and shorting}\label{sub_lev_short}
The MVSK model, though an improvement on the mean-variance model, does not fully capture the richness of portfolio selection in the financial industry. We consider two further extensions to the model. Note that each extension is stand-alone and complementary to the rest. Hence they can be studied separately and used in combination. The first extension we consider is the option to hold \emph{shorted} and \emph{leveraged} positions, this means that the portfolio can consist of borrowed assets. Financial details aside, we allow  some of the assets in the portfolio to have negative weights or weights exceeding one. Mathematically, we no longer optimize over the simplex but rather over a bounded cube. Secondly, we consider \emph{sparsity}. A portfolio is sparse if it supports fewer assets than the selection pool allows. Given two equally well-performing portfolios, one often prefers the sparser portfolio to dense portfolios (where all weights are non-zero) because having fewer assets to manage leads to fewer transaction costs when rebalancing, see for example \cite{bestPortfolioOptimization2010}. Alternatively, some authors explicitly model transaction costs directly, as done in \cite{Li2019HighOP}.

%

\subsection{Multi-objective optimization problems} \label{moop_basic}
In addition to inheriting the difficulties of single-objective optimization problems, MOOPs have new challenges to address, see, e.g., \cite{ehrgottMulticriteriaOptimization2005} for background.
Consider the general MOOP:
\begin{equation} \label{def_moop}
\begin{split} 
\min f(x) &:= (f_1(x),f_2(x),...,f_p(x)) \\
\mathrm{ s.t.}~~&  x \in X ,
\end{split}
 \end{equation}
where $f_1,f_2,...,f_p$ are some scalar valued functions acting on $\R^n$, and $X \subseteq \R^n$.

How one defines optimality is the first change from single to multiple objectives. Real numbers are well ordered by $\leq$, and as such, it is clear when one solution gives a better objective value than another. In contradistinction, the values of MOOPs are real-valued vectors, and thus they are only partially ordered by $\leq$, applied entry-wise between two vectors (of equal size). Optimal solutions to MOOPs are hence only optimal in the sense of not being strictly worse than any other solution vector. Formally we define a partial order on vectors $v,w \in \R^p$:
\begin{equation} \label{def_partial_ord}
\begin{split} 
v \geq w \iff& v_i \geq w_i ~~ (i \in [p]), \\
v \gneqq w \iff& v \neq w~,~v_i \geq w_i ~~ (i \in [p]), \\
v  > w \iff& ~v_i > w_i ~~ (i \in [p]). \\
\end{split}
 \end{equation}
A point $x \in X$  is said to be \emph{Pareto optimal} for  (\ref{def_moop}) if there exists no $y \in X$ such that
\begin{equation}\label{pardef}
f(x) = (f_1(x),f_2(x),...,f_p(x)) \gneqq (f_1(y),f_2(y),...,f_p(y)) = f(y).
\end{equation}
Similarly, a  point $x \in X$  is said to be \emph{locally Pareto optimal} for (\ref{def_moop}) if it is Pareto optimal in some open neighborhood of $x$. The \emph{Pareto front} of (\ref{def_moop}) is defined as the set of all Pareto optimal solutions of (\ref{def_moop}). The following is a well-known fact.

\begin{lemma}
Consider the MOOP (\ref{def_moop}), and assume that the objectives $f_1,f_2,...,f_p$ are all convex functions and that the domain $X$ is a convex set. Then
any locally Pareto optimal point $x$ of (\ref{def_moop}) is also (globally) Pareto optimal.
\end{lemma}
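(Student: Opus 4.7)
The plan is to argue by contradiction, using the classical trick of sliding along the segment between a hypothetical dominating point and the candidate point. Suppose that $x \in X$ is locally Pareto optimal but \emph{not} globally Pareto optimal. Then by definition there exists $y \in X$ with $f(y) \lneqq f(x)$, that is, $f_i(y) \leq f_i(x)$ for all $i \in [p]$ and $f_j(y) < f_j(x)$ for at least one index $j$.

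Next, I would consider the segment $z_t := (1-t)x + ty$ for $t \in [0,1]$. Convexity of $X$ ensures that $z_t \in X$ for every such $t$, so these are all feasible candidates. Applying the convexity of each $f_i$ and then the inequality $f_i(y) \leq f_i(x)$ yields
\begin{equation*}
f_i(z_t) \;\leq\; (1-t)f_i(x) + t f_i(y) \;\leq\; f_i(x) \quad \text{for all } i \in [p],
\end{equation*}
while for the distinguished index $j$ the strict inequality $f_j(y) < f_j(x)$ gives
\begin{equation*}
f_j(z_t) \;\leq\; (1-t)f_j(x) + t f_j(y) \;<\; f_j(x) \quad \text{for every } t \in (0,1].
\end{equation*}
Hence $f(z_t) \lneqq f(x)$ for all sufficiently small $t > 0$.

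Finally, I would conclude by noting that $z_t \to x$ as $t \to 0^+$, so every open neighborhood of $x$ contains points $z_t$ with $f(z_t) \lneqq f(x)$. This contradicts the assumption that $x$ is locally Pareto optimal, completing the proof. There is no genuine obstacle here: the argument is purely a combination of convexity of the sublevel behavior along the segment and the fact that arbitrarily small $t$ places $z_t$ inside any prescribed neighborhood. The only point deserving a small sanity check is that the coordinate-wise definition of $\lneqq$ in (\ref{def_partial_ord}) is preserved under taking a convex combination, which is exactly what the chain of inequalities above verifies.
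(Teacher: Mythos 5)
Your proof is correct and follows essentially the same route as the paper's: contradict local Pareto optimality by taking convex combinations $z_t=(1-t)x+ty$ with a dominating point $y$ and letting $t\to 0^+$. Your version simply spells out the coordinate-wise inequalities that the paper compresses into a single vector-valued chain.
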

\begin{proof}
Suppose by way of contradiction that  $x$ is not globally Pareto and let $y \in X$  be a point such that $f(y) \lneqq  f(x)$.
Take any $t \in (0,1)$ and observe that via convexity we have
$$
f(ty + (1-t)x) \leq  tf(y) + (1-t)f(x) \lneqq   tf(x) + (1-t)f(x) =  f(x).
$$
Since this holds for arbitrarily small positive values of $t$ it holds that $x^*$ is not locally Pareto optimal, contradicting our initial assumption.
\end{proof}

Among the several approaches to optimizing a MOOP, we will be looking for optimizers via \emph{scalarizations} of the MOOP. Scalarization is a well-known approach that converts a MOOP into a single objective optimization problem called the scalarized problem. Several authors have done this for MVSK by encoding some objectives as constraints, see, e.g., \cite{maringerGlobalOptimizationHigher2009}. One downside of this approach is that one must make an a priori estimate of these objectives. Alternatively, one can scalarize by combining the multiple objectives into a single scalar-valued objective function. We follow this approach. For the MVSK problem, the literature predominantly considers two scalarizations.
The first is the Minkowski scalarization, as seen in \cite{Lai2006MeanVarianceSkewnessKurtosisbasedPO, Mhiri2010InternationalPO, Aracolu2011MeanVarianceSkewnessKurtosisAT}. Here one first computes the optimal values for each of the objectives independent of the others
$$
f_i^* :=  \min_{x \in X} f_i(x) ~~ (i \in p).  
$$
Using these independent optima one constructs, for some positive user-defined hyper-parameter $0 < \lambda \in \R^p$, the Minkowski distance scalarization is as follows:
\begin{equation} \label{minkow_scal}
\begin{split} 
\min_{x\in X} \sum_{i \in [p]}\big| f_i(x) -f_i^* \big|^{\lambda_i}.
\\
\end{split}
\end{equation}

The second scalarization is simply a linear combination of the objectives with the linear weights being some choice of hyper-parameter $\lambda \in \R^p$, see \cite{RePEc:com:wpaper:021, Kleniati2009PartitioningPF}.
 The resulting scalarized optimization problem is hence
\begin{equation} \label{eq_lin_scal}
\min_{x\in X} F_\lambda(x),
\end{equation}
where 
\begin{equation} \label{F_lambda}
F_\lambda(x) := \sum_{i \in [p]}\lambda_i f_i(x).
\end{equation}

Note that this scalarization has a linear dependence on hyper-parameters and is also conceptually simple to interpret. We will be using this linear scalarization throughout this paper. Optimizers of the scalarized problem are not guaranteed to be Pareto optimal for the MOOP, but for \emph{neat} scalarizations, this is the case.  A scalarization is said to be \emph{neat} if any optimal solution $w$ of the scalarized problem is also a Pareto optimal solution of the original MOOP.

\begin{lemma}[Proposition 3.9  \cite{ehrgottMulticriteriaOptimization2005}] \label{lem_neat}
If $\lambda > 0$, then the scalarization (\ref{eq_lin_scal}) is neat, i.e., global optimizers of (\ref{eq_lin_scal}) are (global) Pareto optimizers of (\ref{def_moop}).
\end{lemma}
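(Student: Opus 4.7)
The plan is to prove the contrapositive via a direct contradiction argument, exploiting the entry-wise definition of the partial order $\gneqq$ together with strict positivity of the weights $\lambda_i$.

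First I would fix a global optimizer $x^* \in X$ of the scalarized problem (\ref{eq_lin_scal}) and suppose, for contradiction, that $x^*$ is not Pareto optimal for (\ref{def_moop}). By the definition in (\ref{pardef}), this means there exists $y \in X$ with $f(y) \lneqq f(x^*)$, i.e., $f_i(y) \leq f_i(x^*)$ for every $i \in [p]$, with strict inequality for at least one index $j \in [p]$.

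Next I would multiply each inequality by the corresponding positive weight $\lambda_i > 0$ and sum. Because every $\lambda_i$ is strictly positive, the weak inequalities $\lambda_i f_i(y) \leq \lambda_i f_i(x^*)$ are preserved for all $i$, and the strict inequality at index $j$ is preserved as $\lambda_j f_j(y) < \lambda_j f_j(x^*)$. Summing over $i \in [p]$ yields
\[
F_\lambda(y) = \sum_{i \in [p]} \lambda_i f_i(y) < \sum_{i \in [p]} \lambda_i f_i(x^*) = F_\lambda(x^*),
\]
which contradicts the assumption that $x^*$ is a global optimizer of $F_\lambda$ on $X$. Hence $x^*$ must be Pareto optimal for (\ref{def_moop}).

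The argument is essentially one line, so there is no genuine obstacle; the only subtle point to get right is that neatness requires $\lambda > 0$ strictly (not merely $\lambda \gneqq 0$), because a single zero weight would let us absorb a strict improvement in that objective without affecting $F_\lambda$, breaking the implication. Strict positivity of all weights is precisely what ensures the strict inequality at index $j$ transfers to the scalar sum.
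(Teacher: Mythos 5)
Your argument is correct and is essentially the paper's own proof: both proceed by contradiction, taking a dominating point $y$ with $f(y)\lneqq f(x^*)$ and using strict positivity of all $\lambda_i$ to turn the componentwise domination into a strict decrease of $F_\lambda$, contradicting optimality. Your write-up is slightly more explicit about why the strict inequality at the single index $j$ survives the summation, which is the only point the paper glosses over.
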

\begin{proof}
Let $x\in X$ be an optimal solution of (\ref{eq_lin_scal}) and suppose by way of contradiction that $x$ is not Pareto optimal for (\ref{def_moop}), i.e., there exists a $y \in X$ such that $ f(x) \gneqq f(y)$. Then 
$$
 \sum_{i \in [p]}\lambda_i f_i(x) \gneqq  \sum_{i \in [p]}\lambda_i f_i(y)
$$
because $\lambda_i > 0$ for all $i \in [p]$. Hence, this contradicts the fact that $x$ optimizes (\ref{eq_lin_scal}).
\end{proof}

Consider now the case when the feasible set is defined as 
\begin{equation} \label{def_X}
X := \{x \in \R^n : g_j(x) \geq 0,~ j \in [q]\},
\end{equation}
for some functions $g_1,g_2,...,g_q: \R^n \to \R$.
Let $J(x)= \{ j \in [q] : g_j(x) = 0 \}$ denote the index set of active constraints. The following result holds for the MOOP (\ref{def_moop}).
\begin{theorem} [Theorem 3.25 \cite{ehrgottMulticriteriaOptimization2005}] \label{theo_KKT}
Let $X$ be a set as defined in (\ref{def_X}).
Let $f_1,f_2,...,f_p$, $g_1,g_2,...,g_q$ be scalar-valued functions that are continuously differentiable at $x^* \in X$.
Assume that $x^*$ is a Pareto optimal point of (\ref{def_moop}) and that there is no vector $v \in \R^n$ such that
\begin{equation} \label{KT_cond}
\begin{split} 
\inp{\nabla f_i(x^*), v}& \leq 0 ~\text{for all}~ i \in [p], \\
\inp{\nabla f_k(x^*), v}& < 0 ~\text{for some}~ k \in [p], \\
\inp{\nabla f_j(x^*), v}& \leq 0 ~\text{for all}~ j \in J(x^*). \\
\end{split}  
\end{equation}
Then there exist vectors $\lambda \in \Delta^p$ and $\eta \in \bb{R}^q$  such that $\lambda > 0$, $\eta \geq 0$, and
\begin{gather*} 
\sum_{i \in [p]} \lambda_i \nabla f_i(x^*) + \sum_{j \in [q]} \eta_j \nabla g_j(x^*) = 0, \\
\sum_{j \in [q]} \eta_j g_j(x^*) = 0.
 \end{gather*}
Therefore, $x^*$ is a KKT point of the following scalarization of  problem (\ref{def_moop}):
\begin{equation} \label{scal}
\begin{split} 
&\min ~~   F_\lambda(x) := \sum_{i \in [p]}\lambda_i f_i(x) \\
&\text{ s.t. } ~~ x \in X.
\end{split}
\end{equation}
\end{theorem}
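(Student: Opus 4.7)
The plan is to reduce the KKT conclusion to a theorem of the alternative applied index by index; Pareto optimality itself is not used directly in the derivation, which rests entirely on the non-existence hypothesis (\ref{KT_cond}). Once that hypothesis is in hand, the proof becomes a linear-algebra exercise in the spirit of Gordan's or Motzkin's transposition theorems.

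First I would rewrite the non-existence hypothesis ``index by index''. Since (\ref{KT_cond}) asks that \emph{no} $v$ simultaneously makes some $\inp{\nabla f_k(x^*),v}$ strictly negative while all other gradient inner products are nonpositive, this is equivalent to requiring, for each fixed $k \in [p]$, that the system
\begin{align*}
\inp{\nabla f_k(x^*), v} &< 0, \\
\inp{\nabla f_i(x^*), v} &\leq 0 \quad (i \in [p]\setminus\{k\}), \\
\inp{\nabla g_j(x^*), v} &\leq 0 \quad (j \in J(x^*))
\end{align*}
have no solution $v \in \R^n$. Next I would invoke Motzkin's theorem of the alternative on each such system; it delivers, for every $k \in [p]$, nonnegative multipliers $\alpha^{(k)}_i \geq 0$ ($i \in [p]$) with $\alpha^{(k)}_k > 0$, and $\beta^{(k)}_j \geq 0$ ($j \in J(x^*)$), satisfying
$$
\sum_{i \in [p]} \alpha^{(k)}_i \nabla f_i(x^*) + \sum_{j \in J(x^*)} \beta^{(k)}_j \nabla g_j(x^*) = 0.
$$
The crucial output here is the strict positivity $\alpha^{(k)}_k > 0$.

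Finally I would aggregate across $k$. Setting $\wtl\lambda_i := \sum_{k \in [p]} \alpha^{(k)}_i$ and $\wtl\eta_j := \sum_{k \in [p]} \beta^{(k)}_j$ for $j \in J(x^*)$, with $\wtl\eta_j := 0$ otherwise, produces $\wtl\lambda > 0$ componentwise (since $\wtl\lambda_k \geq \alpha^{(k)}_k > 0$ for every $k$) and $\wtl\eta \geq 0$, while the stationarity identity holds by linearity and complementary slackness $\sum_j \wtl\eta_j g_j(x^*) = 0$ is automatic from the support convention on $\wtl\eta$. Rescaling by $\|\wtl\lambda\|_1$ then yields $\lambda \in \Delta^p$ with $\lambda > 0$ and the corresponding $\eta \geq 0$, so that $x^*$ is a KKT point of (\ref{scal}) by construction.

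The main obstacle is invoking Motzkin's alternative in the ``one strict, many weak'' form and extracting the multiplier attached to the strict inequality as \emph{strictly} positive rather than merely nonnegative; without that strict positivity, the aggregated $\lambda$ could land on the boundary of the simplex and the neatness bridge of \cref{lem_neat} between (\ref{scal}) and the MOOP would no longer apply. Once this delicate step is pinned down, the aggregation over $k$ and the normalization into $\Delta^p$ are routine bookkeeping.
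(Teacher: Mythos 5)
The paper offers no proof of this statement: it is imported verbatim as Theorem~3.25 of Ehrgott's monograph \cite{ehrgottMulticriteriaOptimization2005}, so there is no in-paper argument to compare against. Your proposal is correct, and it is essentially the classical Kuhn--Tucker argument that the cited source itself runs through a theorem of the alternative. A few points of detail. You correctly read the third line of (\ref{KT_cond}) as involving $\nabla g_j$ rather than $\nabla f_j$ (a typo in the statement as printed), and you are right that Pareto optimality of $x^*$ plays no role in the derivation beyond being bundled with the non-existence hypothesis into the definition of proper efficiency in the Kuhn--Tucker sense; the multiplier conclusion is a purely linear-algebraic consequence of the infeasibility of (\ref{KT_cond}). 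Your index-by-index reformulation is a genuine equivalence (a solution of the $k$-th system violates the non-existence hypothesis, and conversely any violating $v$ solves the system for whichever index is strict), Motzkin's transposition theorem applied to a system with a single strict row does force that row's multiplier to be strictly positive, and the aggregation $\wtl\lambda_i=\sum_k\alpha^{(k)}_i$ together with the support convention on $\wtl\eta$ delivers $\lambda>0$, stationarity, and complementary slackness after normalization. The only remark worth making is that the decomposition-plus-aggregation step is precisely the standard derivation of Tucker's transposition theorem from Motzkin's; invoking Tucker's theorem directly on the full system $\inp{\nabla f_i(x^*),v}\le 0$ (not all zero), $\inp{\nabla g_j(x^*),v}\le 0$ for $j\in J(x^*)$, yields all the strictly positive $\lambda_i$ in one application and shortens the bookkeeping, but buys nothing essential over your route.
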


A point $x^* \in X$ that is Pareto optimal and satisfies the system (\ref{KT_cond}) is also known in the literature as being \emph{properly efficient in the Kuhn-Tucker sense} (see Definition 2.49 in \cite{ehrgottMulticriteriaOptimization2005}).


\begin{proposition} \label{prop_KKT_convex}
Assume the conditions of \cref{theo_KKT} hold. If in addition $X$ is a convex set and  $F_\lambda$ a convex function, then $x^*$ is a global optimizer of (\ref{scal}).
\end{proposition}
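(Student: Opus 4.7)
The plan is to upgrade the necessary KKT conditions supplied by \cref{theo_KKT} to sufficient conditions by exploiting convexity, via the standard first-order characterization of optimality for a convex function on a convex set. \cref{theo_KKT} furnishes multipliers $\lambda \in \Delta^p$ with $\lambda > 0$ and $\eta \in \R^q$ with $\eta \geq 0$ satisfying the KKT stationarity identity together with complementary slackness $\sum_{j \in [q]} \eta_j g_j(x^*) = 0$. The target is to show $F_\lambda(x) \geq F_\lambda(x^*)$ for every $x \in X$; since $F_\lambda$ is convex on the convex set $X$, it suffices by the supporting-hyperplane inequality $F_\lambda(x) \geq F_\lambda(x^*) + \inp{\nabla F_\lambda(x^*), x - x^*}$ to establish the first-order variational inequality $\inp{\nabla F_\lambda(x^*), x - x^*} \geq 0$ for all $x \in X$.

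To obtain this variational inequality, I would fix an arbitrary $x \in X$ and use convexity of $X$ to observe that the entire segment $x^* + t(x - x^*)$ with $t \in [0,1]$ remains feasible. Consequently, for each active index $j \in J(x^*)$, the scalar function $h_j(t) := g_j(x^* + t(x - x^*))$ satisfies $h_j(0) = 0$ while $h_j(t) \geq 0$ on $[0,1]$; differentiability of $g_j$ at $x^*$ then pins down the sign of $h_j'(0) = \inp{\nabla g_j(x^*), x - x^*}$. Indices outside $J(x^*)$ drop out of the weighted sum because complementary slackness forces the corresponding $\eta_j$ to vanish. Substituting these signed directional derivatives into the KKT stationarity identity converts it into the desired inequality on $\inp{\nabla F_\lambda(x^*), x - x^*}$, and the convex first-order inequality on $F_\lambda$ then concludes the proof.

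The chief subtlety lies in passing from the pointwise KKT gradient equation at $x^*$ to a global first-order inequality over the whole feasible set. This requires combining convexity of $X$ (to supply the feasible segment on which $g_j$ is controlled) with complementary slackness (to annihilate inactive-constraint contributions), so that the sign of each term $\eta_j \inp{\nabla g_j(x^*), x - x^*}$ aligns correctly with the sign of the multipliers appearing in the KKT identity. Once this sign bookkeeping is carried out, the remainder is a short and essentially textbook application of first-order convex analysis.
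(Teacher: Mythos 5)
Your proposal is correct and follows essentially the same route as the paper: the paper's entire proof is the one-line observation that a KKT point of a convex problem is a global optimizer, and your argument is simply the standard verification of that fact via the first-order convexity inequality plus complementary slackness. One caveat on the sign bookkeeping you defer: with the constraints written as $g_j(x)\ge 0$ as in (\ref{def_X}), your feasible-segment argument gives $\inp{\nabla g_j(x^*),\,x-x^*}\ge 0$ for active $j$, so the stationarity identity must be used in the form $\nabla F_\lambda(x^*)-\sum_{j}\eta_j\nabla g_j(x^*)=0$ for the conclusion $\inp{\nabla F_\lambda(x^*),\,x-x^*}\ge 0$ to come out with the right sign --- the plus sign in the displayed identity of \cref{theo_KKT} is inherited from the $g_j\le 0$ convention of the cited source, so it must be flipped when you substitute.
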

\begin{proof}
The claim follows from the fact that any KKT point of a convex problem must be a global optimizer. 
\end{proof}

Let us again consider the scalarized problem (\ref{eq_lin_scal}) where $F_\lambda(x) := \sum_{i \in [p]}\lambda_i f_i(x)$ for some $0 < \lambda \in \R^p$. Depending on the functions $f_1,...,f_p$, the hyper-parameter $\lambda$, and the domain $X$, problem (\ref{eq_lin_scal}) can still be extremely difficult to solve. However, in the special case when the objective $F_\lambda(x)$ and the domain $X$ are convex (strictly convex), there are efficient methods to find the (unique) minimizer \cite{doi:10.1137/1.9781611974997}. Having found an optimizer to the scalarized problem, \cref{lem_neat} relates said optimizer back to a Pareto point of the MOOP.\\

The core theme of this paper is to partially recover the Pareto set of the MVSK problem by solving different linear scalarizations of the MVSK problem. In order to achieve this we identify classes of hyper-parameters $\lambda \in \Delta^4$ that ensure the resulting scalarization $F_\lambda$ is convex over the domain of optimization, which is either the standard simplex or the cube.  \\

\subsection{Structure of the paper}

In \cref{sec_model} we formally derive the MVSK  problem using a random variable to model asset price data, see (\ref{MVSK}).
Applying the linear scalarization, we get a problem of the form (\ref{eq_lin_scal}). If we take $\lambda \in \Delta^4,~\lambda_4 \neq 0$, then $F_\lambda$ becomes a quartic polynomial.
For our purposes, the domain $X$ can be either the simplex or the cube.
Polynomial optimization is an active field of research, for a general reference on this topic we refer to \cite{doi:10.1142/p665}.
For general polynomial optimization, quadratic polynomials are already hard to optimize over the simplex, see \cite{Motzkin1965MaximaFG}.
However, in our particular setting, we will show that for some $\lambda$'s, the problem is convex and can be efficiently solved.
The section ends with a suggested sparse variant of MVSK.\\

In \cref{sec_scalar_MVSK} we characterize the parameters $\lambda \in \Delta^4$ for which $F_\lambda$ is convex. Thus we obtain, for some scalarizations, a differentiable convex problem of the form (\ref{eq_lin_scal}). We can apply first-order optimization methods to this problem and obtain good approximations of optimizers. The particular algorithm we use is called FISTA, described in more detail in \cref{sub_sec_fista}. For the scalarizations $F_\lambda$ that are not convex, we still apply FISTA, though we only have that the returned optimizer is locally optimal. Regardless, a local optimizer of (\ref{eq_lin_scal}) is still a locally Pareto optimal solution to (\ref{eq_lin_scal}), provided $\lambda_i > 0$ for all $i \in [4]$.\\

We partially recover the Pareto set of MVSK by solving different scalarized MVSK problems.
In \cref{sec_num}, we demonstrate the efficacy of our approach by conducting a numerical experiment on real-world data. We explain the well-known optimization algorithm FISTA that we use, and we list its beneficial properties.
Four optimization domains are considered: the simplex, the cube, and their respective sparse analogs. 
We visualize and compare the resulting approximate Pareto sets.
A striking feature of our numerical results is that although each Pareto point obtained in this way is not strictly worse than any other in the sense of the partial order described in (\ref{def_partial_ord}), there are some points that provide a better trade-off among the four objectives. We call such points \textit{solutions of superior trade-off}. These solutions of superior trade-off are described in \cref{sub_sec_grid} and visualized in \cref{sub_sec_num_res}.

 Finally we conclude with some remarks on even higher order models and alternative scalarizations in \cref{sec_conc}.
\\

Hence, our contributions in this paper are three-fold. We characterize some scalarizations of the MVSK problem that result in a convex problem; we propose a sparse MVSK problem; we show that some Pareto optimizers give a better trade-off among the different objectives.

\section{The MVSK model} \label{sec_model}
This section gives the mathematical formulation of the MVSK optimization problem. We start with a random variable representing asset prices and then define the four objectives that constitute the MVSK MOOP. Finally, we conclude with a proposed sparse variant of MVSK and a list of some motivations why this model is of practical interest.

\subsection{The domain of optimization}
A \emph{portfolio} consists of a weighted selection of $n \in \bb{N}$ assets, represented by $w \in \R^{n}$. We will consider two choices of the domain for portfolios. 

The first is the \emph{standard simplex}.
In this setting, investors cannot ``short sell" assets nor take ``leveraged" positions.
We write $w \in \Delta^n$, where $\Delta^n := \{w \in [0,1]^n : \sum_{i=1}^n w_i = 1\}$.

\textit{Short selling} is the act of selling a borrowed asset with the hope that it will depreciate over time. 
After the asset has lowered in price, one repurchases the asset to return to the original owner. 
Mathematically, negative portfolio weights can model this, i.e., $w \in (-\infty,1]^n$.
In finance, \textit{leverage} is buying additional assets on credit, and the investor then aims to make an additional profit using the extra assets before repaying the debt.
Mathematically this means that the portfolio weights can sum up to more than one.

The second setting is the \emph{cube}, where we allow short selling and leverage.
There is a bound $B\in \R_+$ on how leveraged a position can be.
Mathematically we write $w \in [-B, B]^n $, where we set $B=1$ for simplicity.
For a general overview of financial terms, we refer the reader to any standard text like \cite{Melicher1996FinanceIT}.\\

To distinguish the general results of \cref{sec_intro} from the particular setting of MVSK, we change the notation
 from $x$ to $w$. With this, we intend the reader to not think of a general vector $x \in \R^n$ anymore but rather a vector of weights, $w \in \Delta^n$ or $w \in [-1, 1]^n$.

\subsection{Computing the objective functions}
In portfolio optimization, the underlying assets are usually paper assets like stock in publicly traded companies. We do not restrict ourselves to this setting, but it is a useful example of the model. For asset $i\in [n]$ let $\wtl R_i$ denote the  \emph{relative return}, a random variable taking values in $\bb{R}^n$. Hence, $\wtl R_i$ is the fractional change in value relative to the initial cost of purchasing the asset $i$. Let 
\begin{equation}\label{mean} 
M := \bb{E}[\wtl R] = (\bb{E}[ \wtl R_i])_{i \in [n]} \in \R^n,
\end{equation} 
denote the vector of expected returns.
Now define the following random variable:
$$
R  := \Big{(} {\wtl R_i - M_i}\Big{)}_{i \in [n]},
$$
which  we call the \emph{centralized relative returns}.

We can now begin to build the multi-objective MVSK problem. To start, we define the statistical moments as functions of the weights $w$ and the data $R$. Let
\begin{equation}\label{mean_fun}  
f_1(w):=   M^Tw,
\end{equation}
represent the expected returns of the portfolio. We do not use $R$ as it has a zero mean. Similarly, we can define for $k = 2,3,4$ the functions
\begin{equation}\label{mom_funs_k} 
f_k(w) := \bb{E}[ \inp{R,w}^{k} ] ~,~ \inp{R,w}^{k} := (R^Tw)^k = \Big(\sum_{i\in [n]}R_iw_i  \Big)^k .
\end{equation}
These functions are related to the second, third, and fourth moments of $R$ as follows:
\begin{equation}\label{mom_funs} 
\begin{split} 
&f_2(w) :=  w^TV w,  \\
&f_3(w) := (w \otimes w)^TS w, \\
&f_4(w) := (w \otimes w)^TK (w \otimes w), \\
\end{split}
\end{equation} 
where $V := \bb{E}[RR^T] \in \R^{n\times n}$ is the \emph{covariance matrix},   $S := \bb{E}[(R \otimes R) R^T]  \in \R^{n^2 \times n}$ is the \emph{skewness matrix}, and $K :=  \bb{E}[(R \otimes R)(R \otimes R)^T] \in \R^{n^2 \times n^2}$ is the \emph{kurtosis matrix}, all w.r.t. the data $R$. With slight abuse of terminology, we refer to $f_2(w)$ as the variance of portfolio $w$, and similarly, $f_3(w)$ and $f_4(w)$ are called its skewness and kurtosis.

Let us examine these functions. Firstly we note that $f_1,f_2$, and $f_4$ are convex. Indeed $f_1$ is linear and therefore convex. The functions $f_2$ and $f_4$ are convex and nonnegative for all $w \in \bb{R}^n$ by virtue of $V$ and $K$ being positive semidefinite. To see why $V$ is PSD observe that $V$ is the expectation of a random variable $RR^T$ taking PSD matrices as values. Hence, the Hessian of $f_2$, $H(f_2)= V$, is PSD, and thus $f_2$ is convex. The convexity and nonnegativity of the kurtosis function $f_4$ follows for a similar reason.

\subsection{MVSK optimization problem} \label{subsec_MVSK}
With the individual objective functions defined in (\ref{mean_fun}) and (\ref{mom_funs}), we can define the MVSK MOOP:
\begin{equation}\label{MVSK} 
\begin{split} 
\max & ~~ f_1(w)  \\
\min & ~~ f_2(w)  \\
\max & ~~ f_3(w)  \\
\min & ~~ f_4(w) \\
s.t. &~~  w \in \Delta^n.\\
\end{split}
\end{equation}
Interpret this program as follows: one wishes to maximize returns while minimizing extreme events like rare but significant losses. The ``odd" functions $f_1$ and $f_3$ correspond, in expectation, to increased returns when positive and to losses when negative. While the ``even" functions $f_2$ and $f_4$ describe the spread of returns, with larger values corresponding to more significant fluctuations at the extremes. Note that variance and kurtosis are symmetric, which means they treat extreme profits and losses with equal prejudice. As a rule of thumb, investors prefer consistently high returns and dislike volatility.

In contradistinction to scalar optimization problems with a single scalar optimal value, a MOOP has a set of Pareto optimal points, sometimes called its \emph{efficient frontiers} or \emph{Pareto front}. From the investor's perspective, one need only choose from this frontier to be sure that no strictly better choice exists. Some Pareto solutions provide a better spread among the multiple objectives than others, more on this in \cref{sec_num}. However, it still falls to the investor to decide how to choose among these solutions. Hence our task is to find this efficient frontier of (\ref{MVSK}), but first, let us consider some extensions.

\subsubsection{A sparse variant of MVSK}
In selecting a portfolio, we prefer sparse weights, that is, portfolios with most weights equal to zero. Sparse should be understood in contradistinction to dense (portfolios), where almost all of the $n$ possible asset choices have a non-zero weight assigned. Having more assets beyond a point of ``reasonable diversification" could increase management fees and transaction costs, as rebalancing the portfolio requires adjusting more weights. The additional costs will then counteract the profitability of the portfolio.
A second reason a portfolio can become sparse is by disallowing certain asset combinations. When one knows that two assets are causally linked, the portfolio gains little \emph{diversification} by holding both. One of the core ideas in portfolio selection, diversification, is the principle that buying causally unrelated stocks will protect the portfolio from the possibility of significant losses. The idea is that one expects the depreciation of a single stock to be unrelated (or inversely related) to the value of other stocks. Of course, this only holds outside of systemic events like economic crises, see \cite{taleb2020statistical}. Diversification is the motivation for why one does not simply invest all one's capital in the single asset showing the largest return. We reformulate a general sparse version of the problem (\ref{MVSK}) as follows:

\begin{equation}  \label{SMVSK}
\begin{split} 
&\max  ~~ f_1(w)  \\
&\min  ~~ f_2(w)  \\
&\max  ~~ f_3(w)  \\
&\min  ~~ f_4(w) \\
&s.t. ~~  w \in \Delta^n \\
&~~~~~ \prod_{i\in C}w_i = 0 ~\text{for}~C\in \cal{C}, 
\end{split}
\end{equation}
where $\cal{C}$ is some set of subsets of $[n]$.
There are two motivating instances for the above form of sparsity.\\

\textbf{Reducing transaction} costs and management fees by bounding the number of stocks in the portfolio. We do this by setting
$$
{ \cal{C}} = \{ C \subseteq [n] : |C|=k \},
$$
 for some integer $k \leq n$. This is equivalent to saying that the solution $w$ must not have more than $k-1$ non-zero entries, i.e., $|\supp(w)| \leq k-1$, where $\supp(w):= \{i \in [n]: w_i \neq 0\}$. In terms of the portfolio, this is equivalent to holding at most $k-1$ assets at any given time.\\
 
\textbf{Accounting for causally linked stocks}. To factor in the notion of diversification into the above model, we set 
$$
{\cal{C}} = \{(i,j) : i \neq j,~ |\corr(R_i, R_j)| \geq \gamma \},
$$
for some $\gamma$, where $\corr$ is the Pearson correlation coefficient, see \cite{wackerly2014mathematical}, but other notions of mutual information or expert opinion could also be used in constructing $\cal{C}$.\\

By adding sparsity, we are restricting the domain over which we optimize, and hence we obtain a possibly weaker optimal solution. Indeed, if $v$ is a Pareto optimal solution of the sparse problem (\ref{SMVSK}), it need not be Pareto for the dense problem (\ref{MVSK}).

\section{Scalarizing MVSK} \label{sec_scalar_MVSK}
In this section, we will consider a linear scalarization (\ref{LMVSK}) of the MOOP (\ref{MVSK}) and analyse the conditions under which the resulting objective $F_\lambda$ is convex. In particular we characterize the coefficients $\lambda$ for which $F_\lambda$ is convex over the simplex (resp., cube) in terms of the data $R$. In general, it is highly desirable for objective functions to be (strict) convex as it ensures that any local optimizer is also a (unique) global optimizer. Convex functions are well studied and efficiently optimized if the gradient is known, see, for example, the standard textbook \cite{boyd_vandenberghe_2004}.

For any choice of $\lambda := (\lambda_1, \lambda_2, \lambda_3, \lambda_4)  \geq 0$, consider the following scalarization of (\ref{MVSK}):
\begin{equation}\label{LMVSK} 
\begin{split} 
F_\lambda^*:=\min & ~~ F_\lambda(w) := -\lambda_1 f_1(w) + \lambda_2 f_2(w) -\lambda_3 f_3(w) + \lambda_4 f_4(w) \\
s.t. &~~  w \in \Delta^n. \\
 \end{split}
\end{equation}
 This linear formulation is often called the ``weighted sum method". Recall that we are looking for the optimizers of $F_\lambda$  and not for the optimal value. Since for any scalar $c > 0$ we have
$$\argmin_{w \in \Delta^n}F_\lambda(w) =  \argmin_{w \in \Delta^n}F_{c\lambda}(w),$$
we can without loss of generality scale $\lambda$ to lie in the simplex $ \Delta^4$.

Our ambition is hence as follows:
Via \cref{lem_neat}, we can find Pareto optimal solutions of (\ref{MVSK}) by solving (\ref{LMVSK}) for $\lambda > 0$ such that $F_\lambda$ is convex. By doing this for various appropriate $\lambda$, we hope to recover part of the Pareto front. Later we also apply the same process for $\lambda$ that are neither strictly positive nor resulting in convex $F_\lambda$, this still yields a local optimizer of (\ref{LMVSK}) but we have no guarantees of it being Pareto optimal for the MOOP (\ref{MVSK}).

It is not necessarily true that all Pareto optimizers of (\ref{MVSK}) are also optimizers for some scalarization of the form (\ref{LMVSK}).
It is true, however, that each Pareto point $x^*$ of (\ref{MVSK}) satisfying (\ref{KT_cond}) corresponds to a Karush–Kuhn–Tucker (KKT) point of some scalarization with $\lambda > 0$, as was seen in \cref{theo_KKT}.
Applying \cref{theo_KKT} to our setting, we have the following result.

\begin{corollary}[KKT] \label{every}
For every Pareto optimal point $w^*$ of (\ref{MVSK}) satisfying the conditions (\ref{KT_cond}), there exists a  positive $\lambda  \in \Delta^4$, $\lambda >0$, for which $w^*$ is also a KKT point of (\ref{LMVSK}).
\end{corollary}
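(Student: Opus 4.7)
The plan is to apply \cref{theo_KKT} directly, once (\ref{MVSK}) has been rewritten in the canonical all-minimization form of (\ref{def_moop}). Since MVSK alternates maximization and minimization, I would first set $\wtl f_1 := -f_1$, $\wtl f_2 := f_2$, $\wtl f_3 := -f_3$, $\wtl f_4 := f_4$, so that (\ref{MVSK}) becomes $\min_{w \in \Delta^n}(\wtl f_1, \wtl f_2, \wtl f_3, \wtl f_4)$. Both Pareto optimality and condition (\ref{KT_cond}) adapt transparently under this sign change: an ``improving direction'' $v$ for the min-form is obtained from one for the mixed-form by flipping the inequalities carried by the negated objectives, so neither the Pareto status nor the regularity hypothesis at $w^*$ is affected.

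Next I would express the simplex in the constraint form (\ref{def_X}): take $g_i(w) := w_i$ for $i \in [n]$ and encode the affine equality $\sum_i w_i = 1$ by the two inequalities $g_{n+1}(w) := 1-\sum_i w_i \geq 0$ and $g_{n+2}(w) := \sum_i w_i - 1 \geq 0$. All $\wtl f_i$ and $g_j$ are polynomials and hence continuously differentiable on $\R^n$, so the hypotheses of \cref{theo_KKT} are met at $w^*$.

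Applying \cref{theo_KKT} to this min-form MOOP at $w^*$ yields $\lambda \in \Delta^4$ with $\lambda > 0$ and $\eta \geq 0$ satisfying $\sum_{i=1}^{4} \lambda_i \nabla \wtl f_i(w^*) + \sum_j \eta_j \nabla g_j(w^*) = 0$ together with $\sum_j \eta_j g_j(w^*) = 0$. Substituting $\nabla \wtl f_1 = -\nabla f_1$ and $\nabla \wtl f_3 = -\nabla f_3$, the stationarity equation becomes $\nabla F_\lambda(w^*) + \sum_j \eta_j \nabla g_j(w^*) = 0$ with $F_\lambda$ exactly as in (\ref{LMVSK}), together with complementary slackness on the $g_j$. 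Hence $w^*$ is a KKT point of (\ref{LMVSK}) for this same $\lambda \in \Delta^4$, $\lambda > 0$, which is the claim.

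The only real obstacle is bookkeeping: aligning the sign flips between the mixed max/min formulation (\ref{MVSK}) and the all-min form assumed by \cref{theo_KKT}, and handling the simplex equality within the inequality-only constraint description (\ref{def_X}). The two multipliers $\eta_{n+1}, \eta_{n+2}$ for the halves of the equality can be merged into a single unrestricted multiplier for $\sum_i w_i = 1$ if one prefers the standard Lagrangian formulation. Beyond these translations, the corollary is a direct specialization of \cref{theo_KKT}.
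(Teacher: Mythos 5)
Your proposal is correct and follows essentially the same route as the paper, which presents \cref{every} as a direct specialization of \cref{theo_KKT} without writing out a proof. The sign-flip bookkeeping for the max-objectives and the encoding of the simplex equality as two inequalities, which you make explicit, are exactly the translations the paper leaves implicit.
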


We now shift to finding $\lambda$ for which $F_\lambda$ is convex over the simplex, the cube, or the whole space $\R^n$.

\subsection{Convex scalarization of MVSK} \label{convex_res}
In general, optimizing a quadratic polynomial over the simplex is already hard. Indeed, recall the Motzkin-Straus \cite{Motzkin1965MaximaFG} formulation of the stability number of an undirected graph. Problem (\ref{LMVSK}) has a quartic objective and is expected to contain the difficulty of the quadratic case. However, we can solve some convex problems efficiently using gradient methods \cite{boyd_vandenberghe_2004}. We now give several characterizations of $\lambda \in \Delta^4$ for which $F_\lambda(w)$ is convex. Begin by considering the gradient of $ F_\lambda$ at a point $w$
\begin{equation*} 
\begin{split} 
\nabla  F_\lambda(w)  &= \nabla (-\lambda_1M^Tw + \lambda_2\bb{E}[ \inp{R,w}^2 ] - \lambda_3\bb{E}[ \inp{R,w}^{3} ]  + \lambda_4\bb{E}[ \inp{R,w}^{4} ]  ) \\
&=-\lambda_1M + 2\lambda_2\bb{E}[R\inp{R,w} ] - 3\lambda_3\bb{E}[R \inp{R,w}^{2} ]  + 4\lambda_4\bb{E}[R \inp{R,w}^{3} ].  \\
 \end{split}
 \end{equation*}
The Hessian of $F_\lambda$ at $w$ is given by
\begin{equation} 
\begin{split} \label{Hess} 
H(F_\lambda)(w) &= \nabla^2  F_\lambda(w) = 2\lambda_2\bb{E}[RR^T] - 6\lambda_3\bb{E}[RR^T \inp{R,w} ]  + 12\lambda_4\bb{E}[RR^T \inp{R,w}^{2} ]  \\
&= \bb{E}[ ( 2\lambda_2 - 6\lambda_3\inp{R,w} + 12\lambda_4\inp{R,w}^{2})RR^T] \\
&= \bb{E}[2\qua RR^T], 
 \end{split}
 \end{equation}
where we define
\begin{equation} \label{Phi}
\qua := 6\lambda_4\inp{R,w}^{2} - 3\lambda_3\inp{R,w} + \lambda_2.
\end{equation}
Define the quadratic polynomial
\begin{equation} \label{Psi}
\quay := 6\lambda_4y^{2} - 3\lambda_3y + \lambda_2,
\end{equation}
and observe that $\Psi_\lambda (\inp{R,w}) = \qua$ under the change of variables $y:= \inp{R,w}$.

\begin{lemma} \label{pos_coef}
Let $\Rtriu := \min_{w \in \Delta^n}\inp{R,w}$, $\Rtri := \max_{w \in \Delta^n}\inp{R,w}$. Then $\qua \geq 0$ for all $w \in \Delta^n$ if and only if one of the following conditions is satisfied:
\begin{itemize}
\item[(i)]   $\lambda_4 = 0$ and $3\Rtri \lambda_3 \leq \lambda_2$,
\item[(ii)]  $\lambda_4 > 0$ and $\lambda_3 \leq \sqrt{\frac{8}{3}\lambda_2\lambda_4}$,
\item[(iii)] $\lambda_4 > 0$, $\lambda_3 > \sqrt{\frac{8}{3}\lambda_2\lambda_4}$, $ 3\Rtri \lambda_3 \leq \lambda_2+6\Rtri ^2 \lambda_4$, and  $4 \Rtri \lambda_4  \leq \lambda_3$,
\item[(iv)]  $\lambda_4 > 0$,  $\lambda_3 > \sqrt{\frac{8}{3}\lambda_2\lambda_4}$, 
$3\Rtriu \lambda_3 \leq  \lambda_2+6 \Rtriu^2 \lambda_4$, and  $4  \Rtriu \lambda_4 \geq  \lambda_3$.

\end{itemize}
\end{lemma}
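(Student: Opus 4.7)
The plan is to reduce the multivariate nonnegativity question to a one-variable problem via the change of variables already introduced in the excerpt, and then handle the resulting univariate quadratic by a clean case split on the sign of $\lambda_4$ and on its discriminant.

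\textbf{Step 1: Reduction to a one-variable problem.} As $w$ ranges over the simplex $\Delta^n$, the scalar $y := \langle R,w\rangle$ is a continuous function on a connected set, so its image is the interval $[\Rtriu,\Rtri]$. Consequently, $\qua \geq 0$ for all $w \in \Delta^n$ is equivalent to $\quay \geq 0$ for all $y \in [\Rtriu,\Rtri]$, where $\quay = 6\lambda_4 y^2 - 3\lambda_3 y + \lambda_2$. All four cases of the lemma will then be verified by reading off when the (at most quadratic) polynomial $\quany$ is nonnegative on this interval.

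\textbf{Step 2: The degenerate case $\lambda_4 = 0$.} Here $\quany$ is the affine function $\lambda_2 - 3\lambda_3 y$. Since $\lambda_3 \geq 0$ (because $\lambda \in \Delta^4$), this function is non-increasing in $y$, so its minimum on $[\Rtriu,\Rtri]$ is attained at $y = \Rtri$. Nonnegativity therefore reduces to $3\Rtri\lambda_3 \leq \lambda_2$, which is condition (i).

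\textbf{Step 3: The strictly quadratic case $\lambda_4 > 0$.} Now $\quany$ is a convex parabola whose vertex lies at $y^\star = \lambda_3/(4\lambda_4) \geq 0$, with minimum value $\lambda_2 - 3\lambda_3^2/(8\lambda_4)$. If the discriminant $9\lambda_3^2 - 24\lambda_2\lambda_4$ is nonpositive, equivalently $\lambda_3 \leq \sqrt{(8/3)\lambda_2\lambda_4}$, then $\quany \geq 0$ on all of $\R$, which yields condition (ii). Otherwise $\quany$ has two real roots $y_1 \leq y_2$; by Vieta ($y_1 + y_2 = \lambda_3/(2\lambda_4) > 0$ and $y_1 y_2 = \lambda_2/(6\lambda_4) \geq 0$) both roots are nonnegative. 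The polynomial is then strictly negative on the open interval $(y_1,y_2)$, so $\quay \geq 0$ on $[\Rtriu,\Rtri]$ holds iff the interval lies on one side of the roots, namely iff either $\Rtri \leq y_1$ or $\Rtriu \geq y_2$. The first possibility is characterized by the pair of inequalities $\quany(\Rtri) \geq 0$ and $\Rtri \leq y^\star$, which rewrite to $3\Rtri\lambda_3 \leq \lambda_2 + 6\Rtri^2\lambda_4$ and $4\Rtri\lambda_4 \leq \lambda_3$, i.e.\ condition (iii); the second is characterized analogously by $\quany(\Rtriu) \geq 0$ together with $\Rtriu \geq y^\star$, giving condition (iv). Combining the three sub-cases completes the proof of the equivalence.

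The main obstacle, such as it is, will be the bookkeeping in Step 3: one must argue that when $\lambda_3 > \sqrt{(8/3)\lambda_2\lambda_4}$ the roots of $\quany$ are genuinely nonnegative (so that one cannot be ``above'' and the other ``below'' the interval in unexpected ways), and that the two sub-conditions $\Rtri \leq y^\star$ and $\Rtriu \geq y^\star$ correctly separate the ``left of both roots'' regime from the ``right of both roots'' regime. Both points are handled cleanly via the vertex location $y^\star = \lambda_3/(4\lambda_4)$ and Vieta's formulas, so no delicate estimate is needed.
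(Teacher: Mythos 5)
Your proof is correct and follows essentially the same route as the paper's: reduce to the univariate quadratic $\quany$ on the interval $[\Rtriu,\Rtri]$, split on $\lambda_4=0$ versus $\lambda_4>0$ and on the sign of the discriminant, and in the two-root case require the interval to lie entirely left of the smaller root or right of the larger one. The only cosmetic difference is that you derive the inequalities in (iii) and (iv) via the vertex location $y^\star=\lambda_3/(4\lambda_4)$ together with $\quany(\Rtri)\geq 0$ (resp.\ $\quany(\Rtriu)\geq 0$), whereas the paper isolates $\sqrt{\Delta_\lambda}$ and squares; the two computations are equivalent, and your explicit appeal to connectedness of the image of $w\mapsto\inp{R,w}$ makes the ``only if'' direction slightly more careful than the paper's.
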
 
\begin{proof}
If $\lambda_4 = 0$, then  $\qua \geq 0$ if and only if $3\inp{R,w}\lambda_3 \leq \lambda_2 $.
Requiring that this hold for all $w \in \Delta^n$ is equivalent to requiring $3 \Rtri\lambda_3 \leq \lambda_2$. So we find case (i).

Suppose $\lambda_4 > 0$ and consider the discriminant $\Delta_\lambda :=  9\lambda_3^2 -24 \lambda_2\lambda_4$ of $\quay$.

If $\Delta_\lambda < 0$ then $\quay$ has no real roots, meaning that $\quay > 0$ for all $y \in \R$. The condition $\Delta_\lambda < 0$ is equivalent to requiring $\lambda_3 < \sqrt{\frac{8}{3}\lambda_2\lambda_4}$. In the case that $\Delta_\lambda = 0$ then $\quay$ has double root at $y = \frac{3 \lambda_3 }{12 \lambda_4}$ and $\quay \geq 0$ for all $y \in \R$. So we find case (ii).

Assume $\Delta_\lambda > 0$. Then $\quay$ has two roots 
$$
y_l := \frac{3 \lambda_3 - \sqrt{\Delta_\lambda} }{12 \lambda_4},~  y_u := \frac{3 \lambda_3 + \sqrt{\Delta_\lambda} }{12 \lambda_4}.
$$
Hence there are only two cases when $\qua \geq 0$ for all $w \in \Delta^n$. 
The first is when all values of $y = \inp{R,w}$ are below $y_l$, i.e.,
\begin{gather*} 
\Rtri \leq  \frac{3 \lambda_3 - \sqrt{\Delta_\lambda}}{12 \lambda_4} = y_l\\
\iff \sqrt{\Delta_\lambda} \leq   3 \lambda_3 - 12 \Rtri \lambda_4 \\
\iff   \Delta_\lambda = 9\lambda_3^2 -24 \lambda_2\lambda_4 \leq  (3 \lambda_3 - 12 \Rtri \lambda_4)^2  ~\text{and}~    0 \leq   3 \lambda_3 - 12 \Rtri \lambda_4 \\
\iff   4 \lambda_4 \Rtri  \leq \lambda_3 ~\text{and}~  3\Rtri \lambda_3 \leq \lambda_2+6\lambda_4 \Rtri ^2.
\end{gather*} 
Hence, we have shown case (iii).
The second case is when  all values of $y= \inp{R,w}$ are  above $y_u$, i.e.,
\begin{gather*}
\Rtriu \geq \frac{3 \lambda_3 + \sqrt{\Delta_\lambda}}{12 \lambda_4} = y_u\\
\iff  12 \Rtriu \lambda_4 - 3 \lambda_3 \geq \sqrt{\Delta_\lambda}  \\
\iff    9\lambda_3^2 -24 \lambda_2\lambda_4 \leq  (3 \lambda_3 - 12 \Rtri \lambda_4)^2  ~\text{and}~ 0 \leq   12 \Rtriu \lambda_4 - 3 \lambda_3  \\
\iff   4 \Rtriu \lambda_4  \geq  \lambda_3 ~\text{and}~ \lambda_2+6\lambda_4 \Rtriu ^2 \geq 3\Rtriu \lambda_3.
\end{gather*} 
With this case (iv) is proved and the proof is concluded.
\end{proof}

\begin{observation}
Note that condition (iv) in \cref{pos_coef} implies $\Rtriu > 0$. In numerical experiments with real-world data, we often have 
$\Rtriu < 0$, and thus condition (iv) seldom holds.
\end{observation}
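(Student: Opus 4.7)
The plan is to extract the needed positivity directly from the algebraic inequalities listed in condition (iv), so this really is a one-line deduction rather than anything delicate. First I would isolate the last inequality of (iv), namely $4\Rtriu \lambda_4 \geq \lambda_3$, and observe that since $\lambda_4 > 0$ we may divide to obtain $\Rtriu \geq \lambda_3/(4\lambda_4)$. The task then reduces to showing that the right-hand side is strictly positive, i.e., that $\lambda_3 > 0$.

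Next I would note that $\lambda_3 > 0$ is forced by the second inequality in (iv), namely $\lambda_3 > \sqrt{\tfrac{8}{3}\lambda_2\lambda_4}$. The quantity under the radical is nonnegative because $\lambda \in \Delta^4$ gives $\lambda_2 \geq 0$ and $\lambda_4 > 0$ by assumption in (iv); therefore the square root is a well-defined nonnegative real number, and $\lambda_3$ strictly exceeds it. So $\lambda_3 > 0$. Combining this with $\lambda_4 > 0$ in the inequality $\Rtriu \geq \lambda_3/(4\lambda_4)$ yields $\Rtriu > 0$, as claimed.

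For the second sentence of the observation, there is nothing to prove in the mathematical sense: it is simply the empirical remark that the quantity $\Rtriu = \min_{w \in \Delta^n}\inp{R,w}$ computed from centralized relative returns $R$ is typically negative on real datasets. I would justify it informally by recalling that $R$ has zero mean by construction (since we centralized), so any nondegenerate return vector $R$ has both positive and negative realizations with positive probability, which typically pushes $\min_{w \in \Delta^n}\inp{R,w}$ (a minimum over a face that can pick out the worst single asset's realizations in expectation-like sense) below zero. A forward reference to the numerical section where this is observed in practice would suffice.

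There is no real obstacle here: the result is immediate from the defining inequalities of case (iv), and the only subtle point is just remembering that $\lambda_4 > 0$ is part of (iv) so that $\sqrt{\tfrac{8}{3}\lambda_2 \lambda_4}$ is a nonnegative real number and the division by $4\lambda_4$ is legitimate.
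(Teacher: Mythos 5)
Your argument is correct and is exactly the deduction the paper leaves implicit: condition (iv) gives $\lambda_3 > \sqrt{\tfrac{8}{3}\lambda_2\lambda_4} \geq 0$, hence $\lambda_3 > 0$, and then $4\Rtriu\lambda_4 \geq \lambda_3 > 0$ with $\lambda_4 > 0$ forces $\Rtriu > 0$. The second sentence of the observation is indeed an empirical remark requiring no proof, as you note.
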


\begin{corollary} \label{cor_convex}
If $\lambda \in \Delta^4$ satisfies any of the conditions (i)-(iv) of \cref{pos_coef} then $F_\lambda$ is convex on $\Delta^n$.
Moreover if $\lambda \in \Delta^4$ satisfies the condition (ii) of \cref{pos_coef} then $F_\lambda$ is convex on $\R^n$.
\end{corollary}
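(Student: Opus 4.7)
The plan is essentially to read off the corollary from the Hessian formula (\ref{Hess}) together with \cref{pos_coef}, using the standard second-order characterization of convexity.

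First I would observe that $F_\lambda$ is a polynomial, hence smooth, so it is convex on a convex set $C$ if and only if $H(F_\lambda)(w) \succeq 0$ for every $w \in C$. By (\ref{Hess}), $H(F_\lambda)(w) = \bb{E}[2\qua RR^T]$. The matrix $RR^T$ is positive semidefinite for every realization of $R$, so for any $u \in \R^n$,
$$
u^T H(F_\lambda)(w)\, u \;=\; 2\,\bb{E}\!\left[\qua\,(R^T u)^2\right].
$$
Whenever $\qua \geq 0$ almost surely on the support of $R$, the integrand is pointwise nonnegative and the expectation is therefore nonnegative, so $H(F_\lambda)(w) \succeq 0$.

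Second, for the main statement, I would directly apply \cref{pos_coef}: if $\lambda \in \Delta^4$ satisfies any of (i)--(iv), then $\qua \geq 0$ for every $w \in \Delta^n$ (and every $R$ in the support). Combined with the previous paragraph this gives $H(F_\lambda)(w) \succeq 0$ for all $w$ in the convex set $\Delta^n$, hence $F_\lambda$ is convex on $\Delta^n$.

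Third, for the ``moreover'' statement I would examine what distinguishes (ii) from (i), (iii), (iv). The latter three conditions rely on the bounds $\Rtri, \Rtriu$, which are the extremal values of $\inp{R,w}$ over $w\in\Delta^n$, so the nonnegativity of $\qua$ they certify is only valid for $w$ restricted to the simplex. Condition (ii), however, is obtained in \cref{pos_coef} purely from the discriminant of the one-variable polynomial $\quany$: it ensures $\quany(y) \geq 0$ for every $y \in \R$. Since $\qua = \quany(\inp{R,w})$, this yields $\qua \geq 0$ for all $w \in \R^n$ and all realizations of $R$, so $H(F_\lambda)(w) \succeq 0$ everywhere on $\R^n$, and $F_\lambda$ is convex on all of $\R^n$. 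There is no genuine obstacle here; the only care needed is to notice that the bounds $\Rtri,\Rtriu$ in (i), (iii), (iv) tie those cases to the simplex, whereas the discriminant condition in (ii) is intrinsic to $\quany$ and therefore extends globally.
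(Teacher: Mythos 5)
Your proposal is correct and follows essentially the same route as the paper, which simply notes that $\qua \geq 0$ forces $H(F_\lambda)(w) = \bb{E}[2\qua RR^T] \succeq 0$ and invokes \cref{pos_coef}. You merely spell out the expectation argument and the reason condition (ii) extends to all of $\R^n$ in more detail than the paper does.
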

\begin{proof}
These results follow directly from the fact that the Hessian is PSD when $\qua \geq 0$, i.e.,
$$
\qua \geq 0 \implies  H(F_\lambda)(w) \succeq 0.
$$
\end{proof}

The results of \cref{pos_coef} can be extended to strict convexity by making a timid assumption on the random variable $R$.

\begin{corollary}
Consider the Hessian given in (\ref{Hess}) for some $\lambda \in \Delta^4$. Assume that  $\bb{E}[RR^T] \succ 0$ and that, for all $w \in \Delta^n$,  $\qua > 0$ a.e.\footnote{The abbreviation \textit{a.e.} stands for \textit{almost everywhere} and is used to indicate that the accompanying statement may fail, but only on a set of measure zero.}. Then $F_\lambda$ is strictly convex on $\Delta^n$.
\end{corollary}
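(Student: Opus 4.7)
The plan is to show that the Hessian $H(F_\lambda)(w)$ is positive definite at every $w \in \Delta^n$, from which strict convexity of $F_\lambda$ on the simplex follows along each chord.

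First I would fix $w \in \Delta^n$ and an arbitrary nonzero $u \in \R^n$, and use the expression (\ref{Hess}) to write
\[
u^T H(F_\lambda)(w)\, u \;=\; 2\,\bb{E}\!\left[\qua\,(u^T R)^2\right].
\]
By hypothesis $\qua > 0$ a.e.\ with respect to the distribution of $R$, so the integrand is nonnegative a.e. The crux of the proof is to argue strict positivity of this expectation, which I would do by contradiction: if the expectation vanished, then the nonnegative random variable $\qua (u^T R)^2$ would have to be zero a.e., and combined with $\qua > 0$ a.e.\ this forces $u^T R = 0$ a.e. But then
\[
u^T \bb{E}[RR^T]\, u \;=\; \bb{E}[(u^T R)^2] \;=\; 0,
\]
contradicting $\bb{E}[RR^T] \succ 0$ since $u \neq 0$. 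Hence $u^T H(F_\lambda)(w) u > 0$ for every nonzero $u$, i.e., $H(F_\lambda)(w) \succ 0$ for every $w \in \Delta^n$.

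To pass from pointwise positive definiteness of the Hessian to strict convexity of $F_\lambda$ on the simplex (which has empty interior in $\R^n$, so a direct appeal to the usual open-set theorem is not immediate), I would take any two distinct $x,y \in \Delta^n$ and consider $g(t) := F_\lambda(tx + (1-t)y)$ for $t \in [0,1]$. Then $g''(t) = (x-y)^T H(F_\lambda)(tx+(1-t)y)(x-y) > 0$ because $H(F_\lambda) \succ 0$ along the chord and $x - y \neq 0$, so $g$ is strictly convex on $[0,1]$. This is exactly the defining inequality of strict convexity of $F_\lambda$ along $[x,y]$, and since $x,y$ were arbitrary we conclude strict convexity on $\Delta^n$.

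The only real subtlety is the careful combination of the two a.e.\ statements (the assumed positivity of $\qua$ and the hypothetical vanishing of the expectation) to deduce $u^T R = 0$ a.e.; once that measure-theoretic step is in place, the contradiction with $\bb{E}[RR^T] \succ 0$ and the passage to strict convexity along chords are both essentially formal. I do not foresee any genuine obstacle beyond this bookkeeping.
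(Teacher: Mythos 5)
Your proof is correct and follows essentially the same route as the paper: assume the Hessian fails to be positive definite, use nonnegativity of the integrand together with $\qua>0$ a.e.\ to force $v^TRR^Tv=0$ a.e., and contradict $\bb{E}[RR^T]\succ 0$. The only difference is that you additionally spell out the passage from a positive definite Hessian to strict convexity along chords of the simplex, a step the paper leaves implicit; this is a welcome bit of extra care but not a different argument.
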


\begin{proof}
Since $\qua > 0$ a.e. for all $w \in \Delta^n$, we have that $H(F_\lambda) \succeq 0$ on $\Delta^n$.
Assume by way of contradiction that $H(F_\lambda)$ is not positive definite, then there exists a nonzero $v  \in \R^n \setminus \{0\}$ such that $v^T H(F_\lambda) v = v^T \bb{E}[\qua RR^T] v = 0.$ By linearity of the expectation this implies that
$\bb{E}[\qua v^TRR^Tv] = 0.$
Since each argument is a.e. nonnegative we have that $\qua v^TRR^Tv = 0 \text{~a.e.},$ and thus $v^TRR^Tv = 0 \text{~a.e.}$ by virtue of $\qua > 0$ a.e..
Taking the expectation we get $0 = \bb{E}[v^TRR^Tv] =  v^T\bb{E}[RR^T]v$ contradicting our assumption that $\bb{E}[RR^T] \succ 0$.
\end{proof} 
 
\begin{observation} \label{covex_lambda}
The scalarization $F_\lambda(w)$ is convex on the standard simplex $\Delta^n$ if and only if the hyper-parameter $\lambda = (\lambda_1,\lambda_2,\lambda_3,\lambda_4) \in \Delta^4$ satisfies
\begin{equation} \label{SDP_convex}
\lambda_3 \leq \max_{\gamma \geq 0}\{ \gamma  :\bb{E}[ ( 2\lambda_2 - 6\gamma \inp{R,w} + 12\lambda_4\inp{R,w}^{2})RR^T] \succeq 0 ~~(w \in \Delta^n)  \}.\\
\end{equation} 
\end{observation}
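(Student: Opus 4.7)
The plan is to rephrase convexity of $F_\lambda$ on $\Delta^n$ as a semidefiniteness condition on the Hessian that is affine in $\lambda_3$, and then exploit convexity of the PSD cone. From the Hessian computation (\ref{Hess}), $F_\lambda$ is convex over $\Delta^n$ if and only if
\[
M(\gamma,w) := \bb{E}\bigl[(2\lambda_2 - 6\gamma\inp{R,w} + 12\lambda_4\inp{R,w}^2)RR^T\bigr] \succeq 0 \quad \text{for every } w \in \Delta^n
\]
holds at $\gamma = \lambda_3$. The observation will therefore follow if I can show that the feasible set $G := \{\gamma \geq 0 : M(\gamma,w) \succeq 0 \text{ for all } w \in \Delta^n\}$ is a closed interval of the form $[0,\gamma^\ast]$, because then ``$\lambda_3 \leq \max G$'' is equivalent to ``$\lambda_3 \in G$'', which is exactly the Hessian PSD condition at $\gamma = \lambda_3$.

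The main structural step is that, for each fixed $w$, the map $\gamma \mapsto M(\gamma,w)$ is affine and
\[
M(0,w) = 2\lambda_2\,\bb{E}[RR^T] + 12\lambda_4\,\bb{E}[\inp{R,w}^2 RR^T] \succeq 0,
\]
because both summands are nonnegative scalar multiples of PSD matrices (using $\lambda_2,\lambda_4 \geq 0$). Since the PSD cone is closed and convex, the preimage $\{\gamma \geq 0 : M(\gamma,w) \succeq 0\}$ is a closed convex subset of $\R_{\geq 0}$ containing $0$, hence a closed interval $[0,b(w)]$ with $b(w) \in [0,+\infty]$.

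I would then intersect over $w \in \Delta^n$: by the definition of infimum, requiring $\gamma \leq b(w)$ for every $w$ is the same as $\gamma \leq \gamma^\ast := \inf_{w \in \Delta^n} b(w)$, so $G = \bigcap_{w \in \Delta^n}[0,b(w)] = [0,\gamma^\ast]$. In particular $\gamma^\ast \in G$, so the maximum in (\ref{SDP_convex}) is attained and equals $\gamma^\ast$, and $F_\lambda$ is convex on $\Delta^n$ iff $\lambda_3 \in G$ iff $\lambda_3 \leq \gamma^\ast$, which is the claim. The step I expect to demand the most care is verifying that $\gamma^\ast$ itself lies in $G$ (so that the ``$\max$'' in the observation is genuinely attained and not only a supremum); this comes down to tracking the infimum carefully through the intersection of the closed intervals $[0,b(w)]$, and is the only place where the statement could fail to be a clean ``$\leq$''.
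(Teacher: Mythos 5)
Your argument is correct and is essentially the paper's own: the paper states this as an Observation with no formal proof, justifying it only by the surrounding remark that $\lambda_3=0$ yields convexity and that $\lambda_3$ is the limiting factor for PSD-ness of the Hessian, and your write-up supplies exactly the missing details (affine dependence of the Hessian on $\gamma$, PSD-ness at $\gamma=0$ from $\lambda_2,\lambda_4\geq 0$, and closedness/convexity of the PSD cone giving $G=[0,\gamma^*]$ with the maximum attained). The one caveat, inherited from the statement itself rather than introduced by you, is that the ``only if'' direction identifies convexity on $\Delta^n$ with full PSD-ness of the Hessian, whereas convexity on the simplex only forces $v^T H(F_\lambda)(w)\, v \geq 0$ for directions $v$ with $\sum_{i} v_i = 0$; the paper makes the same identification throughout, so your proposal is faithful to its intended reading.
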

When $\lambda_3 = 0$, $F_\lambda$ is convex, so $\lambda_3$ is the limiting factor to PSDness of the Hessian of $F_\lambda$. Hence, we seek the largest $\lambda_3$ for which $H(F_\lambda)(w) \succeq 0$ for all $w \in \Delta^n$. The parameter $\lambda_1$ plays no role in the convexity of $F_\lambda$. The Hessian is linear in $\lambda$ but quadratic in $w$. The expression in problem (\ref{SDP_convex}) is not simply a \emph{linear matrix inequality}, and to the best of our knowledge, it cannot be solved efficiently \cite{doi:10.1137/1.9780898718829}. 


Thus far, we have considered convexity over the simplex domain. Analogous results hold for the cube. To generalize \cref{pos_coef} to the cube simply modify the bounds $\Rtri$ and $\Rtriu$ by defining
$$
\Rsqr := \max_{w \in [-1,1]^n}\inp{R,w},~ \Rsqru := \min_{w \in [-1,1]^n}\inp{R,w}.
$$


\subsubsection{Regions of hyper-parameters $\lambda$ for which $F_\lambda$ is convex}

We define the following nested sets of hyper-parameters $\lambda$
$$
\Lambda_{+} \subseteq  \Lambda_{\Delta} \subseteq  \wh{\Delta}~\text{and}~ \Lambda_{+} \subseteq \Lambda_{\square} \subseteq  \wh{\Delta},
$$
where 
\begin{equation} \label{conv_doms_3} 
\begin{split} 
&\wh{\Delta} := \{(\lambda_2, \lambda_3, \lambda_4) \geq 0 :  \lambda_2+ \lambda_3 + \lambda_4 \leq 1 \}  \subseteq  \R^3, \\
&\Lambda_{+} := \{(\lambda_2, \lambda_3, \lambda_4) \in \wh{\Delta}   : \lambda_2\lambda_4 \geq (3/8)\lambda_3^2 \}, \\
&\Lambda_{\Delta} :=\{(\lambda_2, \lambda_3, \lambda_4) \in \wh{\Delta} : \lambda \text{ satisfies any condition of \cref{pos_coef} for } \Rtri \text{ and } \Rtriu    \},\\
&\Lambda_{\square} :=\{ (\lambda_2, \lambda_3, \lambda_4) \in \wh{\Delta} : \lambda \text{ satisfies any condition of \cref{pos_coef} for }  \Rsqr \text{ and } \Rsqru   \}.\\
\end{split} 
\end{equation} 
Via \cref{pos_coef}, it now follows that if $\lambda \in \Lambda_{+}$ then $F_\lambda$ is convex over $\R^n$. Similarly if $\lambda \in \Lambda_{\Delta}$ (resp., $\Lambda_{\square}$) then $F_\lambda$ is convex over the simplex $\Delta^n$ (resp., the cube $ [-1,1]^n$).
The benefit of eliminating a variable ($\lambda_1$ in this case) is that the hyper-parameter sets $\Lambda_{+},~\Lambda_{\Delta},~\Lambda_{\square}$, and $\wh{\Delta}$ can now be plotted, see \cref{convex}. Keep in mind that the set $\Lambda_{\Delta}$ is a conservative estimate for the set of all $\lambda \in \wh{\Delta}$ for which $F_\lambda$ is convex over the simplex, i.e.,
$$
\Lambda_{\Delta}  \subseteq \{(\lambda_2, \lambda_3, \lambda_4) \in \wh{\Delta} : F_\lambda \text{ is convex on } \Delta^n \}.
$$
Hence, the region $\Lambda_{\Delta}$ shown in \cref{convex} should be thought of as pessimistic, and similarly for $\Lambda_{\square}$.
Furthermore, even if $F_\lambda$ is non-convex, one can still optimize (\ref{LMVSK}) and hope that the local optimum attained is sufficiently good.

The function of these sets is as follows. By optimizing  $F_\lambda$ for different $\lambda \in \Delta^4$, we recover local optimizers $w_\lambda$. If $\lambda \in \Lambda_{\Delta}$ then, by \cref{cor_convex}, we know that the optimizer $w_\lambda$ is globally optimal for problem (\ref{LMVSK}). If additionally we know that $\lambda > 0$, then by \cref{lem_neat} we know that $w_\lambda$ is a Pareto optimal point of problem (\ref{MVSK}).
Later in \cref{sec_num}, we will visualize the quality of solutions $w_\lambda$ by plotting objective values  $f_i(w_\lambda)$ against $\lambda \in \wh{\Delta}$, for $i \in [4]$. Hence, the sets $\Lambda_{\Delta}$, $\Lambda_{\square}$ are useful in showing where we certainly have Pareto optimality.

\begin{figure}[h]
\centering
\includegraphics[scale=0.75]{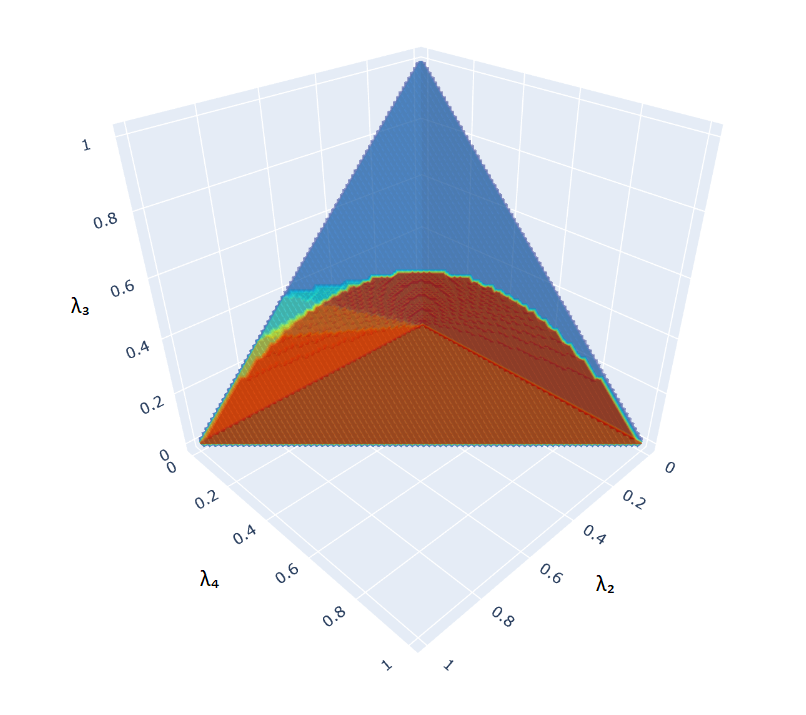}
\qquad
\subfloat[Subfigure 1 list of figures text][{\color{red}$\Lambda_{+}$}]{
\includegraphics[width=0.33\textwidth]{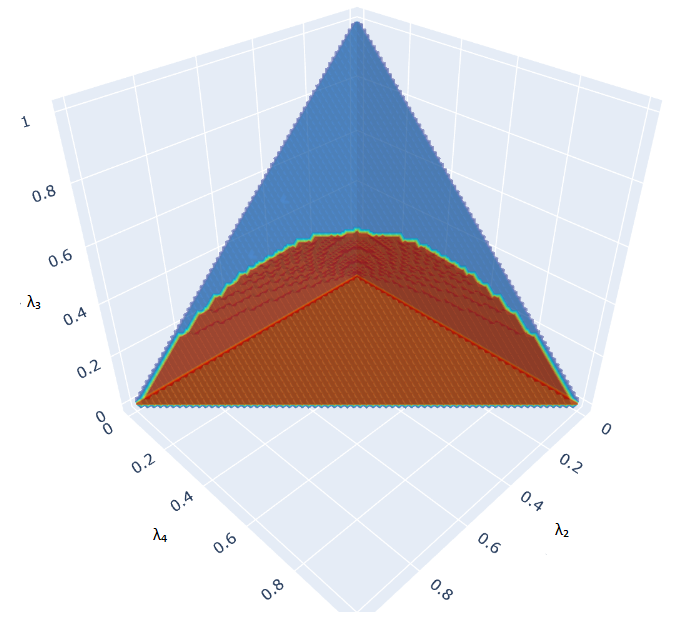}
\label{fig:subfig1}}
\subfloat[Subfigure 2 list of figures text][{\color{olive}$\Lambda_{\square}$}]{
\includegraphics[width=0.33\textwidth]{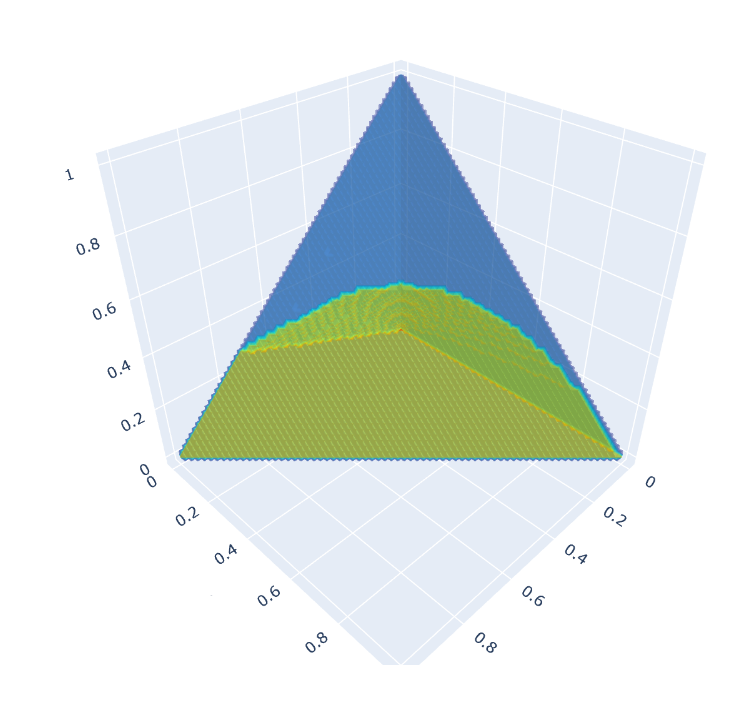}
\label{fig:subfig2}}
\subfloat[Subfigure 3 list of figures text][{\color{teal}$\Lambda_{\Delta}$}]{
\includegraphics[width=0.33\textwidth]{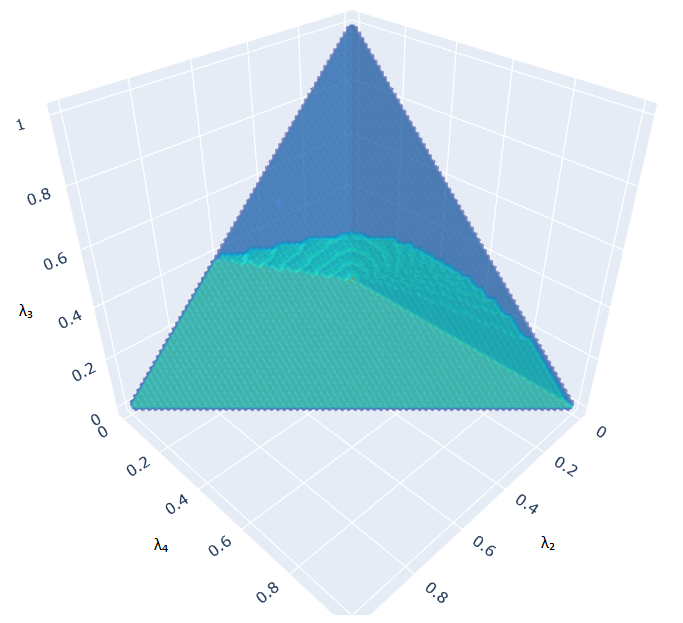}
\label{fig:subfig3}}
\caption{This plot shows the transparent three-dimensional hyper-parameter set  {\color{blue}$\wh{\Delta}$} in {\color{blue}blue} as viewed from the facet: $\{(\lambda_2, \lambda_3, \lambda_4) \geq 0 :  \lambda_2+ \lambda_3 + \lambda_4 = 1 \}  \subset \wh{\Delta}$. The different regions are distinguished by color. In particular,  {\color{red}$\Lambda_{+}$} is shown in {\color{red}red}, {\color{olive}$\Lambda_{\square}$} is shown in {\color{olive}green}, and {\color{teal}$\Lambda_{\Delta}$} is shown in {\color{teal}light-blue}. The domains $\Lambda_{\Delta}$ and $\Lambda_{\square}$ shown here were computed using $\Rtri=0.52$, $\Rsqr=0.87$, and $\Rtriu,~\Rsqru <0$.  
For this instance $\Lambda_{\square} \subseteq \Lambda_{\Delta}$.
The approximate relative volumes for the sub-domains are as follow: $\frac{\vol(\Lambda_{+})}{\vol(\wh{\Delta})}  \approx  0.59$, $\frac{\vol(\Lambda_{\square})}{\vol(\wh{\Delta})}  \approx  0.61$, and $\frac{\vol(\Lambda_{\Delta})}{\vol(\wh{\Delta})}  \approx  0.63$.}
\label{convex}
\end{figure}
\FloatBarrier

\subsection{Scalarized sparse MVSK} \label{sub_sec_spar_scal}
Analogous to the above discussion, one can associate a linear scalarization to the sparse MOOP in (\ref{SMVSK}) as follows:
\begin{equation} \label{SLMVSK}
\begin{split} 
F_{\lambda, {\cal C} }^* :=  & \min~~ F_\lambda(w) := -\lambda_1 f_1(w) + \lambda_2 f_2(w) -\lambda_3 f_3(w) + \lambda_4 f_4(w) \\
& s.t. ~~  w \in \Delta^n \\
&~~~~~ \prod_{i\in C}w_i = 0 ~\text{for}~C \in {\cal C}.
\end{split}
\end{equation}
We now show that optimization problems of the form (\ref{SLMVSK}) can be decomposed into a collection of several independent sub-problems of the form (\ref{LMVSK}). The motivation for doing this is that the sub-problems could possibly be solved independently using parallelization or other forms of distributed computing.  

For any set $U \subseteq [n]$ and vector $x \in \R^{U}$ denote by $x(0,U) \in \R^n$ the lifting of $x$ into $\R^n$, defined entrywise by
\begin{equation*}  
x(0,U)_i :=  \left\{
        \begin{array}{ll}
            x_i & \quad i \in U \\
            0   & \quad i \notin U
        \end{array}
    \right. ~~(i \in [n]).
\end{equation*}
Similarly, for $x\in \R^n$, let $x_{|_U} = (x_i)_{i \in U} \in \R^{U}$ denote the restriction of $x$ to $\R^{U}$. For a function $g:\R^n \to \R $ define the restricted function $g_{|_U}:\R^{U} \to \R,~x \mapsto g(x(0,U))$.

\begin{lemma} \label{spar_poly_lem}
Let $U_1,...,U_p \subseteq [n]$ be all the maximal subsets of $[n]$ not containing any set $C \in {\cal C}$. Then we have
$$
F_{\lambda,{\cal C}}^* = \wtl F_{\lambda,{\cal C}}^*,
$$
where
\begin{equation} \label{SLMVSK_alt}
\wtl F_{\lambda,{\cal C}}^* := \min_{\ell \in [p]} \min_{\wtl w \in \Delta^{U_\ell}} F_{\lambda|_{U_\ell}}(\wtl w).  \\
\end{equation}
\end{lemma}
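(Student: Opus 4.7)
The plan is to reformulate the polynomial feasibility constraint of (\ref{SLMVSK}) as a combinatorial constraint on $\supp(w)$, and then decompose the feasible set into a finite union of faces of the simplex, each canonically identified with some lower-dimensional simplex $\Delta^{U_\ell}$.

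First I would observe that for any $w \in \Delta^n$ every entry satisfies $w_i \geq 0$, so the polynomial equation $\prod_{i \in C} w_i = 0$ is equivalent to the statement that $w_i = 0$ for at least one $i \in C$, i.e., $C \not\subseteq \supp(w)$. Hence $w$ is feasible for (\ref{SLMVSK}) if and only if $\supp(w)$ contains no set of ${\cal C}$. By the maximality of $U_1,\ldots,U_p$, any subset of $[n]$ avoiding every $C \in {\cal C}$ is contained in at least one $U_\ell$, and so the feasible set of (\ref{SLMVSK}) is exactly
$$
\bigcup_{\ell \in [p]} \{w \in \Delta^n : \supp(w) \subseteq U_\ell\}.
$$

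Second, for each $\ell \in [p]$ the face $\{w \in \Delta^n : \supp(w) \subseteq U_\ell\}$ is precisely the image of $\Delta^{U_\ell}$ under the lifting map $\wtl w \mapsto \wtl w(0,U_\ell)$, and this map is a bijection onto the face. The definition of the restricted function (given just above the statement of the lemma) then gives $F_\lambda(\wtl w(0,U_\ell)) = F_{\lambda|_{U_\ell}}(\wtl w)$ for every $\wtl w \in \Delta^{U_\ell}$. Combining these two observations yields
$$
F_{\lambda,{\cal C}}^* = \min_{\ell \in [p]} \min_{w \in \Delta^n,\,\supp(w)\subseteq U_\ell} F_\lambda(w) = \min_{\ell \in [p]} \min_{\wtl w \in \Delta^{U_\ell}} F_{\lambda|_{U_\ell}}(\wtl w) = \wtl F_{\lambda,{\cal C}}^*.
$$

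The only subtle point is the first step: turning the equation $\prod_{i \in C} w_i = 0$ into the combinatorial statement $C \not\subseteq \supp(w)$ relies on the nonnegativity built into $\Delta^n$. Without this observation the union decomposition indexed by the maximal sets $U_\ell$ would not exhaust the feasible set (one could imagine cancellation among signed factors, which simply cannot occur here). Once this identification is made, the remainder is bookkeeping, and the same argument would carry over without change to any face-like domain where coordinates are constrained to be nonnegative.
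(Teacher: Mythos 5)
Your proof is correct and follows essentially the same route as the paper's: both rest on the observation that the feasible set of (\ref{SLMVSK}) decomposes into the faces of $\Delta^n$ whose supports lie in the maximal sets $U_\ell$, combined with the identification $F_\lambda(\wtl w(0,U_\ell)) = F_{\lambda|_{U_\ell}}(\wtl w)$; the paper merely packages this as the two inequalities $F_{\lambda,{\cal C}}^* \geq \wtl F_{\lambda,{\cal C}}^*$ and $F_{\lambda,{\cal C}}^* \leq \wtl F_{\lambda,{\cal C}}^*$. One minor remark: the nonnegativity of the coordinates is not actually needed to pass from $\prod_{i\in C} w_i = 0$ to $C \not\subseteq \supp(w)$, since a product of real numbers vanishes if and only if some factor does, so your ``subtle point'' is automatic and the decomposition carries over verbatim to the cube setting.
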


\begin{proof}
\textbf{($F_{\lambda,{\cal C}}^* \geq \wtl F_{\lambda,{\cal C}}^*$)}
Any optimal solution $w$ of (\ref{SLMVSK}) must have its support contained in some $U_\ell$. Hence there is a $\wtl w \in \Delta^{U_\ell}$ such that $\wtl w(0,U_\ell) = w$ and $\wtl F_{\lambda,{\cal C}}^* \leq F_{\lambda|_{U_\ell}}(\wtl w) = F_\lambda(w) = F_{\lambda,{\cal C}}^*$.\\
\textbf{($F_{\lambda,{\cal C}}^* \leq \wtl F_{\lambda,{\cal C}}^*$)}
Let $\wtl w \in \Delta^{U_\ell}$ for some $\ell \in [p]$ be an optimizer of (\ref{SLMVSK_alt}). Then $\wtl F_{\lambda,{\cal C}}^* = F_{\lambda|_{U_\ell}}(\wtl w) = F_{\lambda}(\wtl w(0,U_\ell)) \geq  F_{\lambda,{\cal C}}^* $.
\end{proof}
The above result is reminiscent of Proposition 6 \cite{https://doi.org/10.48550/arxiv.2209.09573}. In [15], the concept of ideal-sparsity was introduced in the context of the generalized moment problem (GMP), when one restricts the support of the involved measure. This GMP with a single measure with restricted support can be shown to be equivalent to another GMP involving several measures, each having smaller   support than the measure in the original GMP. In \cref{spar_poly_lem}, we show that a polynomial optimization problem with support constraints is equivalent to optimizing over a set of smaller restricted polynomial optimization problems.
In both settings, the critical insight is that the restricted support constraint decomposes into a collection of smaller restrictions.

\subsubsection{Convexity of the scalarized sparse MVSK}

Similar to the dense case in \cref{convex_res}, the objective function in (\ref{SLMVSK}) is convex if $\lambda$ satisfies any of the conditions (i)-(iv) of \cref{pos_coef}.
The result of \cref{pos_coef} transfers to the sparse case because $\qua \geq 0$ on  $\Delta^n$ implies that $\qua \geq 0$ on 
\begin{equation} \label{def_spar_delta}
\Delta^n_{\cal C} := \{ w \in \Delta^n : \prod_{i\in C}w_i = 0 ~\text{for}~C \in {\cal C} \} \subseteq  \Delta^n.
\end{equation}
Hence \cref{pos_coef} and its consequences continue to hold in the sparse setting. Note that the domain $\Delta^n_{\cal C}$ is not convex, so the problem (\ref{SLMVSK}) is not convex.
 However, if $U_1,...,U_p \subseteq [n]$ denote all the maximal subsets of $[n]$ not containing any set $C \in {\cal C}$, then for any $\ell \in [p]$ the sub-problem
$$
\min_{\wtl w \in \Delta^{U_\ell}} F_{\lambda|_{U_\ell}}(\wtl w),
$$
does have a convex domain, i.e., $\Delta^{U_\ell}$. Moreover, on this sub-problem, \cref{pos_coef} can be adapted by using the following bounds
$$
\qu{R}_{ \Delta^{U_\ell}} := \min_{\wtl w \in \Delta^{U_\ell}}\inp{R,\wtl w(0,U_\ell)},~ \qo{R}_{\Delta^{U_\ell}} := \max_{\wtl w \in \Delta^{U_\ell}}\inp{R,w(0,U_\ell)}.
$$
Observe that $\Rtriu \leq  \qu{R}_{ \Delta^{U_\ell}} \leq \qo{R}_{\Delta^{U_\ell}} \leq \Rtri$ for all $\ell \in [p]$.
Furthermore, any $\lambda$ that satisfies at least one of the conditions (i)-(iv) of \cref{pos_coef} using the bounds $\Rtriu$ and $\Rtri$ will necessarily again satisfy one of the conditions using instead now the bounds $\qo{R}_{ \Delta^{U_\ell}}$ and $\qu{R}_{ \Delta^{U_\ell} }$, for any $\ell \in [p]$. Intuitively one can think of using these new bounds $\qu{R}_{ \Delta^{U_\ell}}$  and $\qo{R}_{\Delta^{U_\ell}}$ as relaxing the condition  $\qua \geq 0$ for all $w \in \Delta^n$ to the weaker condition $\qua \geq 0$ for all $w \in \Delta^n $ with $\supp(w) \subseteq U_\ell$. This mirrors the fact that there are potentially more hyper-parameters $\lambda \in \Delta^4$ for which $F_{\lambda|_{U_\ell}}$ is convex over $\Delta^{U_\ell}$ for each $\ell \in [p]$ than there are $\lambda \in \Delta^4$ for which $F_\lambda$ is convex over $\Delta^n$. 

The sparse problem (\ref{SLMVSK}) could have combinatorially many sub-problems to solve, but each sub-problem is smaller than the original problem and can be solved independently of the other sub-problems. If we set ${\cal C} $ to be the collection of all sets of size $k+1$, then there are ${n \choose k}$ sub-problems to solve, each involving $k$ variables.

\section{Numerical experiments} \label{sec_num}
In this section, we apply the theory from the preceding sections to real-world data.
We discuss the optimization algorithm \textit{FISTA}, by Beck and Teboulle \cite{doi:10.1137/080716542}, that we use to solve the scalarized problem (\ref{LMVSK}), and we motivate its use by listing some of FISTA's desirable properties.
We explain our methodology for acquiring a grid approximation for the Pareto set of MVSK problem (\ref{MVSK}).
Having obtained a set of optimizers of the scalarized problem (\ref{LMVSK}) for different choices of hyper-parameters, we compare and visualize the objective values of the MOOP (\ref{MVSK}) at said optimizers. We observe that some optimizers give a better overall balance among the four objectives.
This procedure is performed for the simplex and cube settings as well as their sparse analogs.

\subsection{Optimization algorithm for the scalarized problem}\label{sub_sec_fista}
\textit{Fast iterative shrinkage-thresholding algorithm} (FISTA), also known as \textit{fast proximal gradient method}, is a well-studied first-order iterative optimization algorithm first devised and analyzed by Beck and Teboulle \cite{doi:10.1137/080716542}.
Consider the scalar optimization problem (\ref{LMVSK}) and assume that $F_\lambda$ is convex. Then
under some mild smoothness assumptions, the details of which we omit to mention here for the sake of brevity, there is the following performance guarantee for the  k$^{\mathrm{th}}$ iteration of FISTA applied to (\ref{LMVSK}), see Theorem 10.34  \cite{doi:10.1137/1.9781611974997}:
$$
F_\lambda(x^k) - F_\lambda(x^*) \leq \frac{2L_F \|x^0-x^*\|^2 }{(k+1)^2}.
$$
Here $L_F > 0$ is the  Lipschitz constant of $F_\lambda$, $x^0$ is the initial point, $x^*$ is an optimizer, and  $x^k$ is the point obtained from FISTA at the k$^{\mathrm{th}}$ iteration. For all our application of FISTA we used $k=2000$ iterations.
For more details on FISTA, we refer to Chapter 10 of the monograph by Beck \cite{doi:10.1137/1.9781611974997}. We now proceed to mention some of the properties of FISTA that make it well suited for our problem.\\ 

Like many gradient descent algorithms, FISTA makes use of a projection operator in order to maintain the simplex (resp., cube) constraints. The operator that projects to the simplex defined by
$$
\proj_{\Delta^n}: \R^n \to \Delta^n,~x \mapsto \argmin_{y \in \Delta^n} \|x-y\|.
$$

If the nearest unconstrained optimizer lies outside of the simplex, then most gradient steps will leave the domain. Projecting back to the simplex results in a sparse vector, i.e., without full support. The sparsity seems to be due to the fact that projections are often on a face of the simplex. Hence, most optimizers obtained from FISTA will be sparse. We provide a histogram of the supports of optimizers from the set $\Wd$ (defined in \cref{sub_sub_optimizer_sets}) for our particular problem in \cref{supp_hist}. This sparsity does not occur in the case of the cube domain, i.e., the supports of $\Ws$ (defined in \cref{sub_sub_optimizer_sets}) are all full. One possible reason for this is that the unconstrained optimizer lies within the cube, and, as such, the projection operator does nothing. Note that the cube is full-dimensional in contradistinction to the simplex, which lies in the hyperplane $\{x \in \R^n : \sum_{i \in [n]}x_i = 1\}$.\\

\begin{figure}[ht]
	\includegraphics[scale=0.38]{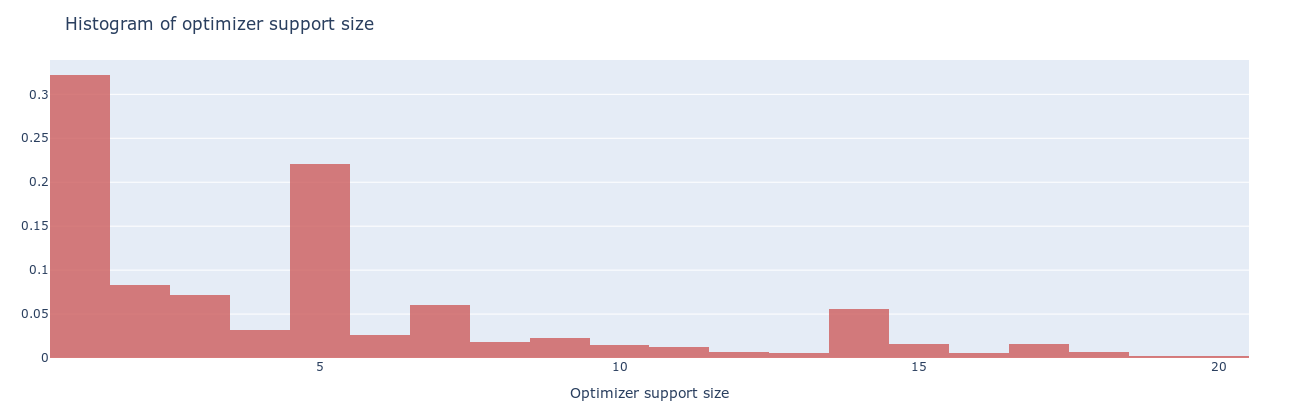}
	\caption{Normalized histogram of the support sizes $|\supp(w_\lambda)|$ of optimizers $w_\lambda \in \Wd$.} 
	\label{supp_hist}
\end{figure}
\FloatBarrier

FISTA is an iterative algorithm that starts from an initial guess $x^0$ and then incrementally improves a proposed optimizer until a certain number of iterations have been completed. In the convex problem, the algorithm will converge to the global optimum regardless of where one starts, but a closer start does imply faster convergence. Furthermore, if the problem is not convex, and one starts sufficiently close to the global optimizer, then one can be sure that FISTA will converge to the true optimum. An initial guess $x^0$ that is close to the global optimizer $x^*$ is called a \textit{warm start}. We now propose to use the optimizer from an already solved problem (\ref{LMVSK}), with a fixed $\lambda$, as a warm start for solving (\ref{LMVSK}) with a different hyper-parameter $\wh \lambda$. In other words, fix $\lambda \in \Delta^n$. If 
\begin{equation*}
\begin{split} 
w_\lambda \in ~&\argmin_{w \in \Delta^n} F_\lambda(w),  
 \end{split}
\end{equation*}
then take $x^0 = w_\lambda$ as a warm start for FISTA when solving
\begin{equation} \label{LMVSK_alt}
\begin{split} 
&\min_{w \in \Delta^n}  F_{\wh \lambda}(w).
 \end{split}
\end{equation}
Our intuition here is as follows: if $\lambda$ is close to $\wh \lambda$, then we expect $w_{\lambda}$ should be close to an optimizer $w_{\wh \lambda}$ of (\ref{LMVSK_alt}). Note that this is a heuristic and we provide no proof of the validity of this intuition. 
The same idea can be applied to computing sparse optimizers via FISTA. We elaborate more on this in the following sub-section.

\subsubsection{Optimization algorithm for the sparse scalarized problem}\label{sub_sub_sec_spar_fista}
We saw above that the optimizers of problem (\ref{LMVSK}) are sometimes sparse for the simplex setting, but not always.
So we propose a simple scheme for finding optimizers with support not exceeding some fixed integer $k \in \bb{N}$.
We do this starting from a set of possibly dense solutions $W$.
Let  $w_{\lambda} \in W$ be the optimizer of the (dense) problem (\ref{LMVSK}).
If $|\supp(w_{\lambda})|\leq k$ then we are done.
So suppose that $|\supp(w_{\lambda})|> k$. Keeping with the notation of \cref{sub_sec_spar_scal} we let ${\cal C} =  \{C \subseteq [n]:|C| > k \}$ and define the set
$$
{\cal U}^k :=  \{ U \subseteq [n]:|U| = k  \}
$$
of all maximal subsets of $[n]$ that do not contain any set $C \in {\cal C} $.
The sparse problem (\ref{SLMVSK}) can be rewritten as follows:
\begin{equation} \label{def_U_prob}
\begin{split}
\min_{\substack{w \in \Delta^n\\|\supp(w)| \leq k }} F_{\lambda}(w)  
= \min_{\substack{ U \in {\cal U}^k \\ \wtl w \in \Delta^U   }} F_{\lambda|_U}(\wtl w). 
\end{split}
\end{equation}
For a fixed $ U \in {\cal U}^k$  we can solve the sub-problem
\begin{equation} \label{def_sub_prob}
\begin{split} 
\min_{\wtl w \in \Delta^U  } F_{\lambda|_U}(\wtl w), 
\end{split}
\end{equation} 
using FISTA with $x^0 = \proj_{\Delta^{U}}(w_\lambda)$ as a warm start, where $w_\lambda$ is assumed to be an optimizer from the dense problem (\ref{LMVSK}) with the same hyper-parameter $\lambda$.
\paragraph{Removing sub-problems based on proximity to the dense optimizer.}
In order to not consider all ${n \choose k}$-many sets $U$ of $~{\cal U}^k$, we propose the following two heuristics to remove sets $U$ for which the resulting sub-problem (\ref{def_sub_prob}) could have a poor optimum value. The two heuristics we introduce can be used independently of each other. However, we will use them together in the sequence we introduce them.

The first heuristic consists of discarding all sets $U \in  {\cal U}^k$ that do not satisfy $U \subseteq \supp(w_{\lambda})$. Doing so yields only ${|\supp(w_{\lambda})| \choose k} \leq {n \choose k}$ sets to optimize over in (\ref{def_U_prob}). The second heuristic is to look at the elements $w_{\lambda,U}$ of the set
$$
W_{\lambda,k} := \{ w_{\lambda,U}:= \Big(\proj_{\Delta^U}(w_\lambda)\Big)(0,U) : U \in  {\cal U}^k,~ U \subseteq \supp(w_{\lambda}) \} \subseteq \R^n,
$$
obtained by projecting $w_{\lambda}$ onto $\Delta^U \subseteq \R^U$ and then lifting the projection to a vector in $\R^n$ by padding entries not supported by $U$ with zeros, for all appropriate sets $U$. To use the second heuristic independently of the first simply drop the $ U \subseteq \supp(w_{\lambda})$ constraint in the definition of $W_{\lambda,k}$. We can then choose to solve problem (\ref{def_sub_prob}) only over sets $U$ for which $w_{\lambda,U}$ is close to $w_\lambda$ in the Euclidean norm. For our implementation we take the sets $U$ corresponding to the $n$ closest $w_{\lambda,U}$ to $w_\lambda$. Though we provide no guarantee that choosing a $U \subseteq [n]$ such that $ w_{\lambda,U}$ is closest to $w_{\lambda}$ would result in an optimum value of problem (\ref{def_sub_prob}) being any better than another choice of $U$, we still find that this a helpful heuristic for removing poor choices of $U$.

\subsubsection{The set of obtained optimizers} \label{sub_sub_optimizer_sets}
Whether $F_\lambda$ is convex or not, we can apply FISTA to obtain at least a local optimizer $w_\lambda$ for problem (\ref{LMVSK}). Construct the following sets of optimizers, obtained by applying FISTA to various scalarizations:
\begin{equation} \label{W_delta}
\begin{split} 
&W_{\Delta} :=\{ w_\lambda \in \argmin_{\mathrm{FISTA}~w\in \Delta^n}F_\lambda(w) : \lambda \in \Delta^4 \}, \\
&W_{\square} :=  \{ w_\lambda \in \argmin_{\mathrm{FISTA}~w\in [-1,1]^n}F_\lambda(w) : \lambda \in \Delta^4  \}.
\end{split} 
\end{equation}
Here, $\argmin_{\mathrm{FISTA}}$ denotes the local minimizers obtained via the algorithm FISTA, not to be confused with the true (unknown) global minimizers.
In \cref{par_opt_simp} we will visualize the values of the objectives $f_1(w),f_2(w),f_3(w)$, and $f_4(w)$ for $w \in W_{\Delta}$ (resp., $w \in W_{\square}$) using colors.
Similarly, we construct the sets of sparse FISTA local optimizers
\begin{equation*}
\begin{split} 
&W_{\Delta,k} :=  \{ w_\lambda \in\argmin_{\mathrm{FISTA}~w\in \Delta^n,  ~|\supp(w)|\leq k~}F_\lambda(w) : \lambda \in \Delta^4 \}, \\
&W_{\square,k} := \{ w_\lambda \in \argmin_{\mathrm{FISTA}~w\in [-1,1]^n,  ~|\supp(w)|\leq k~}F_\lambda(w) : \lambda \in \Delta^4  \}, 
\end{split} 
\end{equation*}
obtained by following the procedure described in \cref{sub_sub_sec_spar_fista}.
As mentioned before, projecting onto the simplex often produces a sparse vector. Hence, it makes sense to use $w \in W_{\Delta}$ as a starting point for computing $W_{\Delta,k}$, as many of the vectors of $W_{\Delta}$ may already be sparse enough. Regardless of whether the elements of $W_{\Delta}$ (resp., $W_{\square}$) are sparse we can use the ideas of \cref{sub_sub_sec_spar_fista} to prune computations and generate warm starts for the problems associated with $W_{\Delta,k}$ (resp., $W_{\square,k}$).

\subsection{Defining objective functions from empirical data} \label{sub_sec_theo_emp}
For the sake of generality, we have worked with a vector-valued random variable $R$ (resp.,  $\wtl R$) taking values in $\R^n$. Practically, the data will arise from a table of results, taking the form of an $n \times m$ matrix $\wtl T \in \R^{n \times m}$, where $m$ is the number of outcomes observed over time. We introduce new notation for the empirical data $\wtl T$ and the subsequent derived quantities. The entry $\wtl T_{i,j}$ (resp.,  $T_{i,j}$) is interpreted as the \emph{empirical (resp., centralized) relative returns} of asset $i$ at time $j$. In this context, the expectation is taken over the outcomes. The mean becomes the \emph{empirical mean}, i.e.,
$$
M =   \Big(  \frac{1}{m}\sum_{p \in [m]} \wtl T_{i,p}  \Big)_{i \in [n]} \in \R^n.
$$
Hence, the \emph{empirical centralized relative returns} is defined for each $i \in [n]$ and $p \in [m]$  by 
$$
T_{i,p} :=  \wtl T_{i,p} - M_i.
$$
Similar to the mean, the formulation of the other empirical moments is as follows:
\begin{equation}\label{mom_funs_emp} 
\begin{split} 
&V = \Big(\frac{1}{m-1}\sum_{p,q \in [m]} T_{i,p} T_{j,q}  \Big)_{i,j \in [n]}, \\
&S = \Big(\frac{1}{m}\sum_{p,q,r \in [m]} T_{i,p} T_{j,q} T_{k,r}  \Big)_{(i,j) \in ([n] \times [n]),~k \in [n]}, \\
&K = \Big(\frac{1}{m}\sum_{p,q,r,s \in [m]} T_{i,p} T_{j,q} T_{k,r} T_{\ell,s}  \Big)_{ (i,j), (k,\ell) \in ([n] \times [n])}. \\
\end{split}
\end{equation}
Observe that we use the unbiased estimator of the variance in (\ref{mom_funs_emp}); for a general reference on statistical estimators, we refer to \cite{wackerly2014mathematical}.
The objective functions $f_1$, $f_2$, $f_3$, and $f_4$ defined in (\ref{mean_fun}) and (\ref{mom_funs}) can henceforth be redefined in terms of the above $M$, $V$, $S$, or $K$. \\

\noindent Using $T \in \R^{n \times m}$, the bounds $\Rtri$ and $\Rtriu$ in \cref{pos_coef} now become
$$
\Rtri = \max_{i\in [n], p \in [m]}T_{i,p},~\Rtriu = \min_{i\in [n], p \in [m]}T_{i,p}.
$$
For the cube the bounds are $\Rsqr = \max_{ p \in [m]} \sum_{i \in [n]}|T_{i,p}|$ and $\Rsqru = -\Rsqr$. The sparse analogs $\qu{R}_{ \Delta^{U}}$, $\qo{R}_{\Delta^{U}}$, $\qu{R}_{ \square^{U}}$, and $\qo{R}_{\square^{U}}$  are defined, mutatis mutandis, in the same manner. The bounds we gave in \cref{convex} are also used for all computation we show. We only compute and use the dense bounds ($\Rtri$, $\Rtriu$, $\Rsqru$, and $\Rsqr$), even for the sparse settings. \\

\noindent In the next section we sub-sample the sets $W_{\Delta}$, $W_{\square}$, $W_{\Delta,5}$, and $W_{\square,5}$, described in the preceding section.
Our empirical data $\wtl T$ will be a selection of stocks from the well-known \emph{Standard and Poor's 500} (S{\&}P500) stock market index, see \cite{editorsofencyclopaediabritannica5002023}. We will consider $n=20$ stocks, each measured in increments of a day over a timespan of $m=500$ days starting in January 1990. We have chosen this dataset because it is well known and publicly available. However, everything we describe in this paper could also be applied to any other time series data of asset prices. For the reader's convenience we list some papers \cite{Mhiri2010InternationalPO,Aracolu2011MeanVarianceSkewnessKurtosisAT} that investigate the MVSK model on markets different from the S{\&}P500. Using $\wtl T$ we can generate $M$, $V$, $S$, and $K$ as described above. From here we can define problem (\ref{LMVSK}) and its sparse analog (\ref{SLMVSK}). Solving these problems, using the procedure described in \cref{sub_sec_fista}, we obtain elements from the sets $W_{\Delta}$, $W_{\square}$, $W_{\Delta,5}$, and $W_{\square,5}$.  \\

\subsection{A grid approximation of the Pareto set}\label{sub_sec_grid}
Recall that the ultimate goal is to obtain Pareto optimizers of the MVSK problem (\ref{MVSK}).
Via \cref{lem_neat}, solving  the scalarization (\ref{LMVSK}) for $\lambda > 0$ gives a Pareto optimizer of (\ref{MVSK}).
However, we can still recover an optimizer from  solving  the scalarization (\ref{LMVSK}) for $\lambda \geq 0$, we simply have no guarantee of them being Pareto optimal in (\ref{MVSK}).
Because $\Delta^4$ contains uncountably many elements we resort to sub-sampling $\Delta^4$ with a uniform mesh. Fix $s \in \bb{N}$, and consider the following sets:
\begin{equation*}  
\begin{split} 
&\Delta^4_{[s]} := \{\lambda : \lambda \in \{0,\frac{1}{s},\frac{2}{s},...,1\}^4 \cap \Delta^4\}, \\
&\wh \Delta_{[s]}  := \{(\lambda_2, \lambda_3, \lambda_4) : (1-(\lambda_2+ \lambda_3+ \lambda_4),\lambda_2, \lambda_3, \lambda_4) \in  \Delta^4_{[s]}   \}  \subseteq  \R^3, 
\end{split} 
\end{equation*}  
that are clearly in bijection.
For our computations we take $s =40$, resulting in $|\wh \Delta_{[40]}|= 11480$ choices of hyper-parameter $\lambda$ to consider.
For each $\lambda \in \Delta^4_{[40]}$, we solve the associated scalarization (\ref{LMVSK}) using FISTA to obtain a set of local optimizers $w_\lambda$, denoted by
$$
\Wd \subseteq \{ w_\lambda \in \argmin_{\mathrm{FISTA}~w\in \Delta^n}F_\lambda(w) : \lambda \in \Delta^4_{[40]} \} \subseteq W_{\Delta}.
$$
Observe that the set $\Wd$ is not necessarily contained in the Pareto front, but the following subset is:
$$
 \{ w_\lambda \in \Wd  : \lambda \in \Lambda_{\Delta},~ \lambda >0  \}.
$$
Here we use the claims from \cref{convex_res} and \cref{lem_neat} that if $\lambda \in \Lambda_{\Delta}$ and $\lambda > 0$ then $w_\lambda$ is a Pareto optimizer of problem (\ref{MVSK}). The reason we consider the bigger set $\Wd$ is that we get a more complete picture, see the figures of \cref{sub_sec_num_res}. Although some points of $\Wd$ are not guaranteed to be a Pareto optimizer of (\ref{MVSK}), they are nonetheless quite comparable to the points that are Pareto optimal for (\ref{MVSK}). We illustrate this claim with visualization in the subsequent sub-sections of \cref{sub_sec_num_res}. 

In order to compare points $w \in \Wd$, we rank them in terms of their values for the objective functions $f_1$, $f_2$, $f_3$, and $f_4$ in (\ref{MVSK}).
For each $w \in \Wd$ we compute the values $f_1(w)$, $f_2(w)$, $f_3(w)$, and $f_4(w)$.
For the sake of clarity, since there is a scale difference between the different functions, we linearly rescale the values to be in the interval $[0,1]$.
Formally, for each $i \in [4]$ define
$$
\Fid : = \{ f_{i}^{[40]}(w): w  \in \Wd \},
$$
to be the set of linearly scaled values  $f_{i}(w)$ for $w \in \Wd$,  where
\begin{equation} \label{scaling} 
f_{i}^{[40]}(w) :=
\left\{
        \begin{array}{ll}
            \frac{f_{i}(w) -  f_{i}^{\mathrm{min},[40]}}{f_{i}^{\mathrm{max},[40]} - f_{i}^{\mathrm{min},[40]}} & \quad i =1 \text{ or }3\\
            1 - \frac{f_{i}(w) -  f_{i}^{\mathrm{min},[40]}}{f_{i}^{\mathrm{max},[40]} - f_{i}^{\mathrm{min},[40]}} & \quad i=2 \text{ or }4\\
        \end{array},
    \right.
\end{equation}
with 
$$
f_{i}^{\mathrm{max},[40]} := \max_{w \in \Wd } f_{i}(w),~f_{i}^{\mathrm{min},[40]} :=  \min_{w \in \Wd } f_{i}(w).
$$
Hence, for any $i\in [4]$, the set $\Fid$ is contained in the unit interval $[0,1]$, with ``less desirable" values close to zero and ``more desirable" values close to one.
Note that the scaling $f_{i}^{[40]}$ considers the fact that we want to maximize $f_1$ and $f_3$, and to minimize $f_2$ and $f_4$.
Hence, the set $\Fid$ gives us a way to compare the performance of each portfolio $w \in \Wd$ with respect to the objective function $f_i$, for all $i \in [4]$. For each $i\in [4]$, we plot $\Fid$ (in color) against $\wh \Delta_{[40]}$ (in $\R^3$), see \cref{simp_fig}.

In order to aggregate the quality of an optimizer $w \in \Wd$ over all of the objectives $f_1$, $f_2$, $f_3$, and $f_4$, we propose looking at the value
$$
f^{[40]}(w) := \sum_{i \in [4]} f_{i}^{[40]}(w) \in [0,4].
$$
The intuition behind this value is that if $w \in \Wd$ has a value $f^{[40]}(w)$ close to four, then it does well among many of the objectives and is hence a superior choice to another solution $v \in \Wd$ for which $f_{i}^{[40]}(v) \geq f_{i}^{[40]}(w)$ for some $i \in [4]$ but $f^{[40]}(v) < f^{[40]}(w)$. We refer to the following set
$$
\Wds{\eta} : = \{  w  \in \Wd: f^{[40]}(w) \geq (1-\eta) \cdot \Big( \max_{w \in \Wd} f^{[40]}(w)  \Big)  \},
$$
where $\eta \in (0,1) $, as the set of portfolios with $\eta$\emph{-superior trade-off}, and we define the set of associated \emph{scores} 
$$
\Fstar{\eta} : = \{ f^{[40]}(w) : w \in \Wds{\eta}  \}.
$$ 
We plot $\Fstar{0.01}$ in color against $\wh \Delta_{[40]}$ in \cref{F_Pareto_simp_sum}. Our plots should not be compared to figures as those in \cite{maringerGlobalOptimizationHigher2009} where three of four objective are plotted against each other with two independent and the third dependent. We give a separate plot for each objective and we scale for comprehensibility.  \\

\paragraph{Handling the cube and sparse cases.}
Above, we have described the process for the simplex ($w \in \Delta^n$), but the treatment is analogous for the cube domain ($w \in [-1,1]^n$) and sparse domains ($w \in \Delta^n$, $|\supp(w)|\leq k$) and ($w \in [-1,1]^n$, $|\supp(w)| \leq k$ ). 
Notation-wise, the sets $\Ws$, $\Fs{i}~(i\in[4])$, $\Wss{\eta} $ and $\Fstars{\eta}$ are all defined analogously to the simplex case, now using the domain $w \in [-1,1]^n$ instead of $w \in \Delta^n$.
Similarly the sparse simplex sets are denoted by $\Wdk$, $\Fdk{i}~(i\in[4])$, $\Wdks{\eta}$ and $\Fkstar{\eta}$, where $k\in \bb{N}$ is an upper bound on the support size of the elements as described in \cref{sub_sub_sec_spar_fista}. The sparse cube sets, denoted $\Wsk$, $\Fsk{i}~(i\in[4])$, $\Wsks{\eta}$ and $\Fsstar{\eta}$, are defined, mutatis mutandis, in the same manner.

\subsection{Numerical results} \label{sub_sec_num_res}
This final subsection is the culmination of the preceding subsections.
For the S{\&}P500 data considered at the end of \cref{sub_sec_theo_emp} we compute $\Wd$, $\Fid (i \in [4])$, $\Wds{0.01}$, and $\Fstar{0.01}$.
For each $i \in [4]$ we plot $\Fid$ (in color) against the hyper-parameter set $\Delta_{[40]} \subseteq \R^3$.
Doing so, we observe how each portfolio $w_\lambda \in \Wd$ makes a trade-off between the objectives $f_1$, $f_2$, $f_3$, and $f_4$.
Which of the objectives are favoured by $w_\lambda$ is influenced by the choice of $\lambda$.
For example, for $\lambda_1 = 1-(\lambda_2+\lambda_3+\lambda_4)  \geq 0.4$ the portfolios  $w_\lambda$ tend to have values $f_{1}^{[40]}(w_\lambda)$ close to one, see \cref{simp_subfig1}.
Observations like these are useful to investors who can now visually navigate the $\Fid~(i \in [4])$ in \cref{simp_fig} to find a portfolio $w_\lambda \in \Wd$ that matches their risk preferences.

To see which $\lambda  \in \Delta_{[40]}$ correspond to portfolios $w_\lambda$ with a good balance of all four objectives we plot $\Fstar{0.01}$ (in color) against $\Delta_{[40]}$ (resp., $\Delta_{[40]}\cap \Lambda_{\Delta} $) in \cref{F_Pareto_simp_sum}. Note that the hyper-parameters $\lambda  \in \Delta_{[40]}$ for which $w_\lambda \in  \Wd \setminus\Wds{0.01}$ are not displayed so as not to clutter the plot.

Above, we explained the process for the (dense) simplex setting, but the same treatment applies to the cube and sparse settings, resulting in analogous figures and similar observations.

\subsubsection{Numerical results in the simplex setting: $w \in \Delta^n$} \label{par_opt_simp}
\begin{figure}[!htbp]
\centering
\subfloat[Subfigure 1 list of figures text][ $f_{1}^{[40]}(w_\lambda) \in \Fd{1}$ vs. $(\lambda_2,\lambda_3,\lambda_4) \in \wh \Delta_{[40]}$]{
\includegraphics[width=0.42\textwidth]{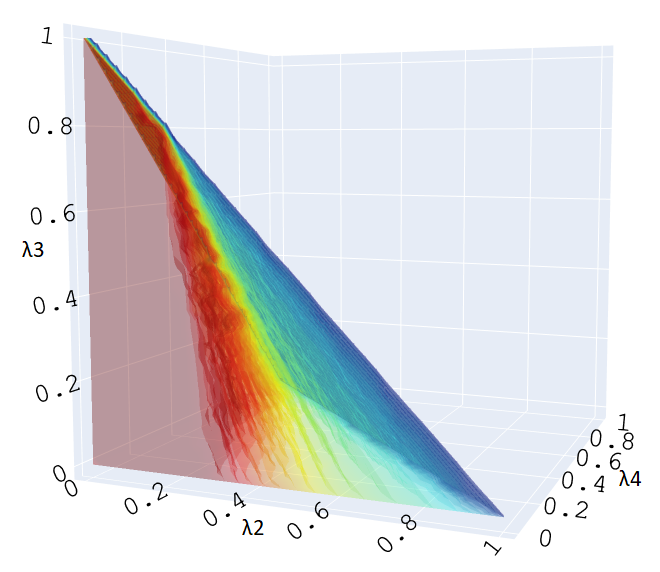}
\label{simp_subfig1}}
\subfloat[Subfigure 2 list of figures text][$\Fd{2}$ vs. $\wh \Delta_{[40]}$]{
\includegraphics[width=0.42\textwidth]{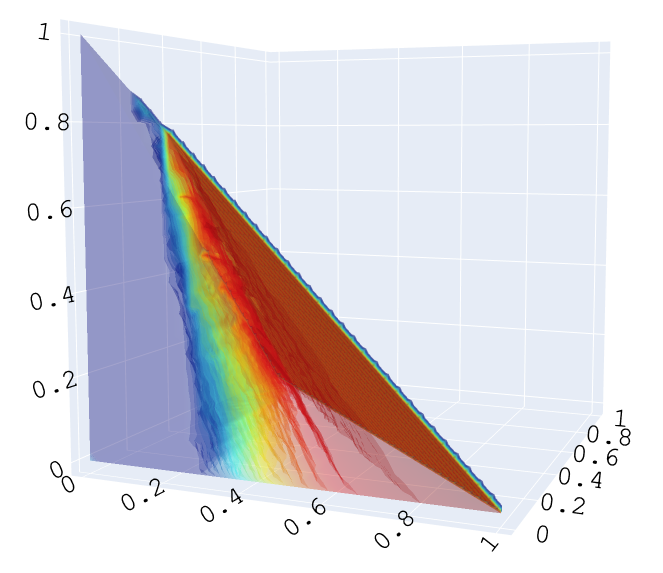}
\label{simp_subfig2}}
\qquad
\subfloat[Subfigure 3 list of figures text][$\Fd{3}$ vs. $\wh \Delta_{[40]}$]{
\includegraphics[width=0.42\textwidth]{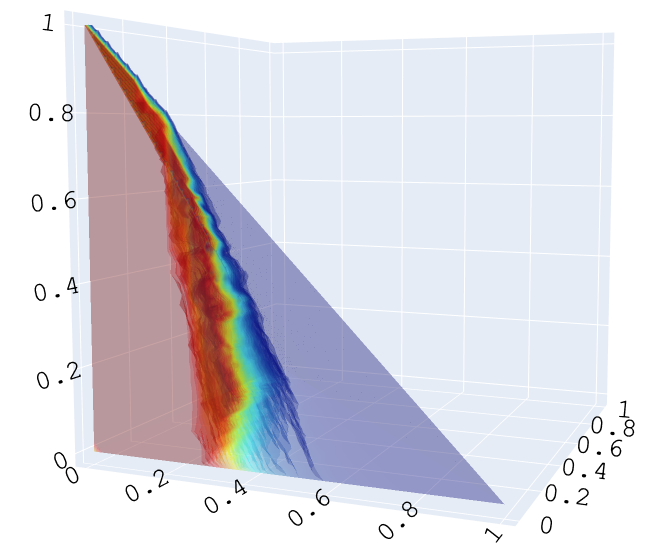}
\label{simp_subfig3}}
\subfloat[Subfigure 4 list of figures text][$\Fd{4}$ vs. $\wh \Delta_{[40]}$]{
\includegraphics[width=0.42\textwidth]{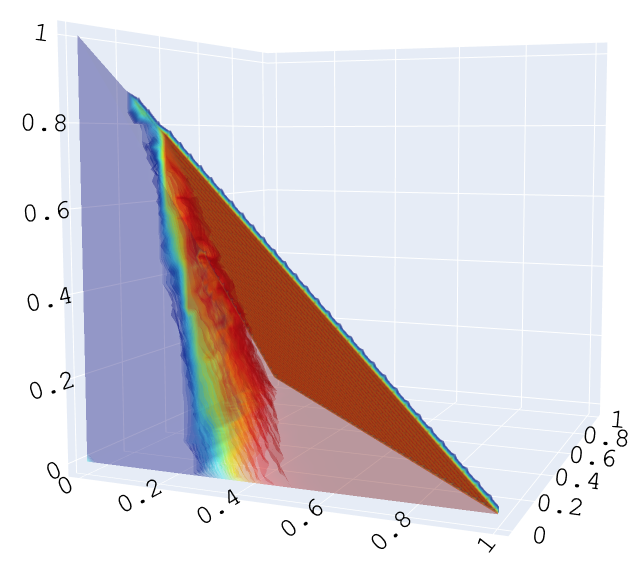}
\label{simp_subfig4}}
\qquad
\centering
\includegraphics[width=0.5\textwidth]{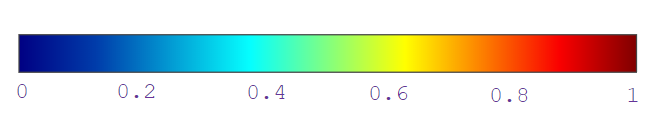}
\caption{This figure shows the transparent three-dimensional plots of $\Fid$ ($i \in [4]$) (in color) versus $(\lambda_2,\lambda_3,\lambda_4) \in \wh \Delta_{[40]}$, viewed from the facet: $\{(\lambda_2, \lambda_3, \lambda_4)  \in \wh \Delta:  \lambda_4 = 0 \}$. For every $i \in [4]$, every point $(\lambda_2, \lambda_3, \lambda_4) \in  \wh \Delta_{[40]}$ is assigned a color $f_{i}^{[40]}(w_\lambda) \in [0,1]$, where $w_\lambda \in \Wd$. Hence, {\color{red}red} regions correspond to better values while {\color{blue}blue} regions correspond to worse values.}
\label{simp_fig}
\end{figure}

\begin{figure}[!htbp]
\centering 
    \subfloat[Subfigure 1 list of figures text][$f^{[40]}(w_\lambda)\in \Fstar{0.01}$ vs.\\ $(\lambda_2,\lambda_3,\lambda_4) \in \wh \Delta_{[40]}$ such that $w_\lambda \in \Wds{0.01}$]{
	\includegraphics[scale=0.5]{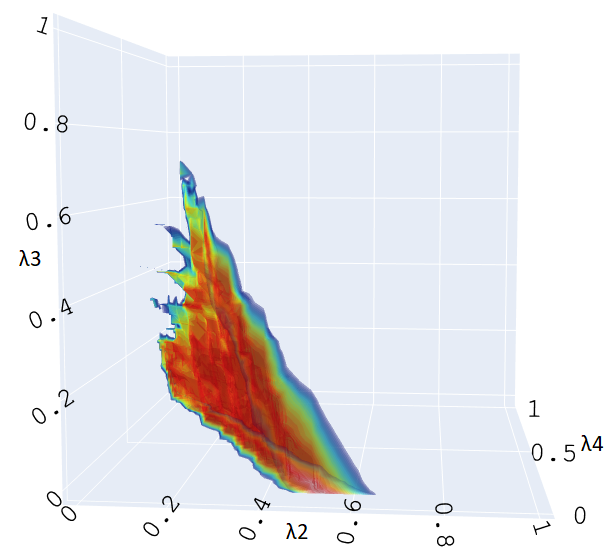}
	\label{simp_superior_subfig1}}
	\subfloat[Subfigure 2 list of figures text][$\Fstar{0.01}$ vs. $\wh \Delta_{[40]} \cap \Lambda_{\Delta}$]{
	\includegraphics[scale=0.5]{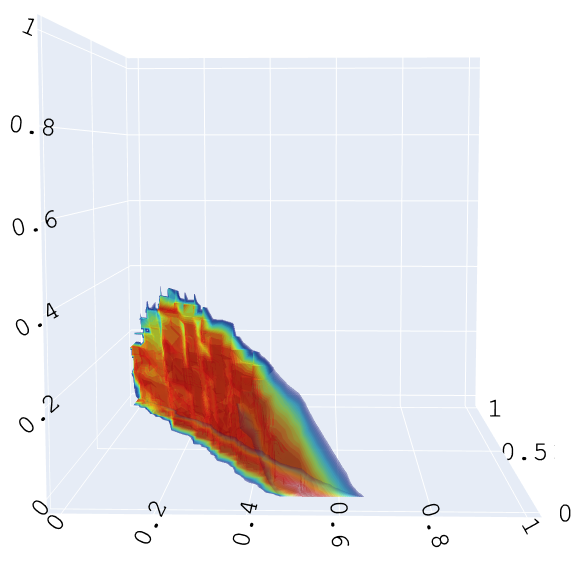}
	\label{simp_superior_subfig2}}
	\qquad
	\centering
	\includegraphics[width=0.5\textwidth]{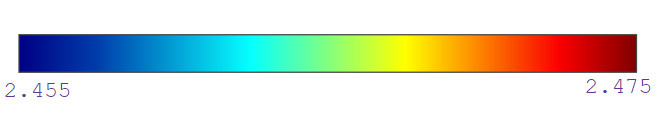}
	
	\caption{ This figure shows the transparent three-dimensional plot of $f^{[40]}(w_\lambda) \in \Fstar{0.01}$ in color versus $(\lambda_2,\lambda_3,\lambda_4) \in  \wh \Delta_{[40]}$ such that $w_\lambda \in \Wds{0.01}$, viewed from the facet: $\{(\lambda_2, \lambda_3, \lambda_4)  \in \wh \Delta:  \lambda_4 = 0 \}$. In particular, every point $(\lambda_2, \lambda_3, \lambda_4) \in  \wh \Delta_{[40]}$ is assigned a color $f^{[40]}(w_\lambda) \in [0,4]$. The values $ \Fstar{0.01}$ range from $0.99 \cdot 2.475$ to $\max_{w_\lambda \in \Wds{0.01} }f^{[40]}(w_\lambda) \approx 2.475$, which is indicated by the color bar. Again, {\color{red}red} regions correspond to better values while {\color{blue}blue} regions correspond to worse values.} 
\label{F_Pareto_simp_sum}
\end{figure}

\begin{table}[!htbp]
\caption{Selected results for $\lambda \in \wh \Delta_\Delta^{[40],0.01} \subseteq \R^4$.}
  \begin{center}
    \label{tab:table1}
    \begin{tabular}{c|c|c|c|c|c}
    $\lambda$& $f_{1}^{[40]}(w_\lambda)$&  $f_{2}^{[40]}(w_\lambda)$&  $f_{3}^{[40]}(w_\lambda)$&  $f_{4}^{[40]}(w_\lambda)$& $|\supp(w_\lambda)|$ \\ \hline
  
 [0.154,  0.256,  0.077,  0.513 ]& 0.623&  0.81 &  0.058&  0.978&  5 \\  \hline
 [0.026,  0.077,  0.256,  0.641 ]& 0.601&  0.825&  0.05 &  0.98 & 10\\ \hline
 [0.231,  0.41 ,  0.308,  0.051 ]& 0.581&  0.854&  0.034&  0.989&  5\\ \hline
 [0.462,  0.513,  0.026,  0.0   ]& 0.691&  0.724&  0.118&  0.942&  5\\ \hline
 [0.051,  0.051,  0.205,  0.692 ]& 0.677&  0.741&  0.104&  0.95 &  5\\ \hline
 [0.179,  0.359,  0.308,  0.154 ]& 0.562&  0.872&  0.026&  0.992&  5\\ \hline
 [0.462,  0.41 ,  0.128,  0.0   ]& 0.774&  0.586&  0.241&  0.85 &   3\\ \hline
 [0.282,  0.333,  0.385,  0.0   ]& 0.752&  0.625&  0.203&  0.881&   5\\ \hline
 [0.154,  0.231,  0.487,  0.128 ]& 0.676&  0.742&  0.104&  0.95 &   5\\ \hline
 [0.256,  0.256,  0.077,  0.41  ]& 0.715&  0.686&  0.148&  0.922&   5  \\ \hline
    \end{tabular}
  \end{center}
\end{table}

In \cref{simp_fig}, regions where the objectives $f_1$ and $f_3$ perform well (are red) overlap heavily, see \cref{simp_subfig1} and \cref{simp_subfig3}. Furthermore, these regions overlap with the regions where the objectives $f_2$ and $f_4$ do poorly (are blue), namely the rear slice of the simplex where either $\lambda_2$ or $\lambda_4$ is small, see \cref{simp_subfig2} and \cref{simp_subfig4}. The central wedge, $(\lambda_2,\lambda_3,\lambda_4) \in \wh \Delta_{[40]}$ such that $w_\lambda \in \Wds{0.01}$, where the objectives seem to balance out is shown in \cref{simp_superior_subfig1} along with the same wedge restricted to $\Lambda_{\Delta}$, shown in \cref{simp_superior_subfig2}. 

Recall from definition (\ref{conv_doms_3}) that $\Lambda_{\Delta}$ is a set of hyper-parameters $\lambda$ for which $F_\lambda$ is convex over the simplex $\Delta^n$.  Further recall that FISTA converges to a global minimizer when applied to a convex problem. With this in mind one would expect the quality of optimizers produced by FISTA to decline as $\lambda$ leaves $\Lambda_{\Delta}$ and $F_\lambda$ (possibly) ceases to be convex. However, this is not apparent from our plots. Observe how there does not seem to be a change in color in the plots of \cref{simp_fig,F_Pareto_simp_sum} as the hyper-parameters $\lambda$ move out of the region $\Lambda_{\Delta}$. This hints at the possibility that the local optima obtained by FISTA for hyper-parameters in $\wh{\Delta} \setminus \Lambda_{\Delta}$ are not much worse than the global optima.

Lastly, observe how the set 
$$
\wh\Delta_\Delta^{[40],0.01} := \{ \lambda \in \wh{\Delta}_{[40]} : w_\lambda \in \Wds{0.01} \}
$$
of hyper-parameters corresponding to solutions of superior trade-off overlap with the respective sets $\{\lambda \in \wh{\Delta} :  \lambda > 0\}$ and $\Lambda_{\Delta}$, see \cref{fig:subfig3} and \cref{simp_superior_subfig2}. 
In fact, the approximate volumes of these two sets and their intersection relative to $\wh\Delta_\Delta^{[40],0.01}$ are as follows:
\begin{equation*}  
\begin{split} 
&\frac{\vol( \{\lambda \in \wh\Delta_\Delta^{[40],0.01} :  \lambda > 0\}\cap \Lambda_{\Delta})}{\vol(\wh\Delta_\Delta^{[40],0.01})} \approx 0.77,\\
&\frac{\vol( \{\lambda \in \wh\Delta_\Delta^{[40],0.01} :  \lambda > 0\})}{\vol(\wh\Delta_\Delta^{[40],0.01})} \approx 0.83,\\
&\frac{\vol(\wh\Delta_\Delta^{[40],0.01}\cap \Lambda_{\Delta}) }{\vol(\wh\Delta_\Delta^{[40],0.01})}  \approx 0.90.
\end{split}
\end{equation*} 
Hence,  by virtue of \cref{lem_neat} and \cref{cor_convex}, we have that approximately $77\%$ of the optimizers with superior trade-off in $\Wds{0.01}$ are guaranteed to be Pareto optimizers of the MVSK problem (\ref{MVSK}).

For concreteness we show in \cref{tab:table1} the numerical values $f_{i}^{[40]}(w_\lambda) ~(i \in [4])$ for ten randomly selected hyper-parameters $\lambda \in \wh\Delta_\Delta^{[40],0.01}$.
We make the following observations. First, the skewness, i.e., $f_{3}^{[40]}(w_\lambda)$, seems to be the weakest performing objective relative to the others. Second, variance and kurtosis, i.e., $f_{2}^{[40]}(w_\lambda)$ and $f_{4}^{[40]}(w_\lambda)$, seem to be positively correlated. Third, the associated portfolios $w_\lambda$ are all sparse with at least half of their entries zero. Eight of the ten portfolios in  \cref{tab:table1} have support size $5$, this corroborates the data in the histogram shown in \cref{supp_hist}.

In the literature computational results are often represented in tabular from as we did in \cref{tab:table1}, see for example\cite{Lai2006MeanVarianceSkewnessKurtosisbasedPO,RePEc:com:wpaper:021}. Presenting results in this way for a large selection of hyper-parameters soon becomes cumbersome, especially in our case where we have $|\wh \Delta_{[40]}|= 11480$ (recall \cref{sub_sec_grid}). Moreover the overall patterns are often obscured by the detail of each specific entry. In contradistinction, by representing the results as we did in \cref{simp_fig} and \cref{F_Pareto_simp_sum} we see larger trends across the various choices of hyper-parameters $\lambda$. Hence, via the grid sampling approach of \cref{sub_sec_grid} and the visualizations of this section we believe we get a better overall understanding of the relationship between $\lambda$, $w_\lambda$, and the objective values $f_i(w_\lambda)~(i \in [4])$ than by simply looking at a few specific values of $\lambda$. \\

We now proceed with the other settings (sparse simplex, cube, and sparse cube), which follow in a similar manner.
Because the subsequent sub-sections follow the same format as this one, we omit describing the figures and focus instead on the differences and new observations.


\pagebreak
\subsubsection{Numerical results in the sparse simplex setting: $w \in \Delta^n$, $|\supp(w)|\leq 5 $} \label{par_opt_simp_spar}
\begin{figure}[!htbp]
\centering
\subfloat[Subfigure 1 list of figures text][$\Fdfive{1}$ vs. $ \wh \Delta_{[40]}$]{
\includegraphics[width=0.42\textwidth]{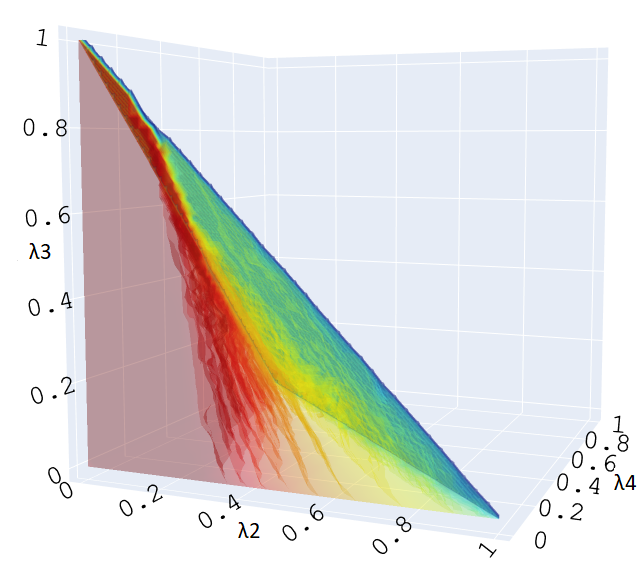}
\label{simp_spar_subfig1}}
\subfloat[Subfigure 2 list of figures text][$\Fdfive{2}$ vs. $\wh \Delta_{[40]}$]{
\includegraphics[width=0.42\textwidth]{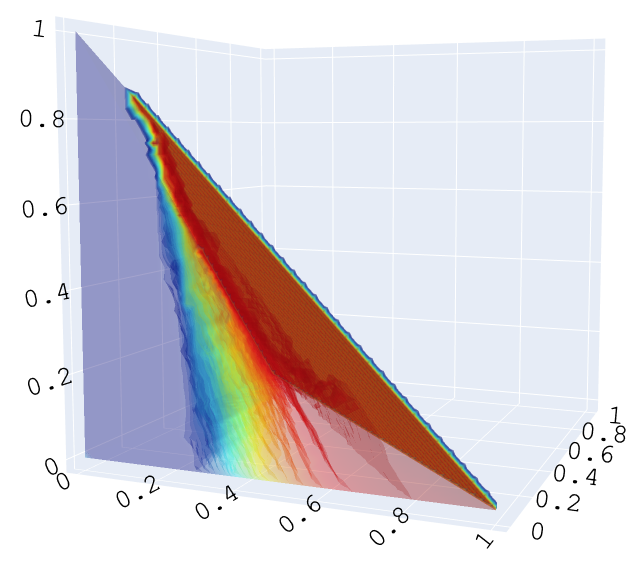}
\label{simp_spar_subfig2}}
\qquad
\subfloat[Subfigure 3 list of figures text][$\Fdfive{3}$ vs. $\wh \Delta_{[40]}$]{
\includegraphics[width=0.42\textwidth]{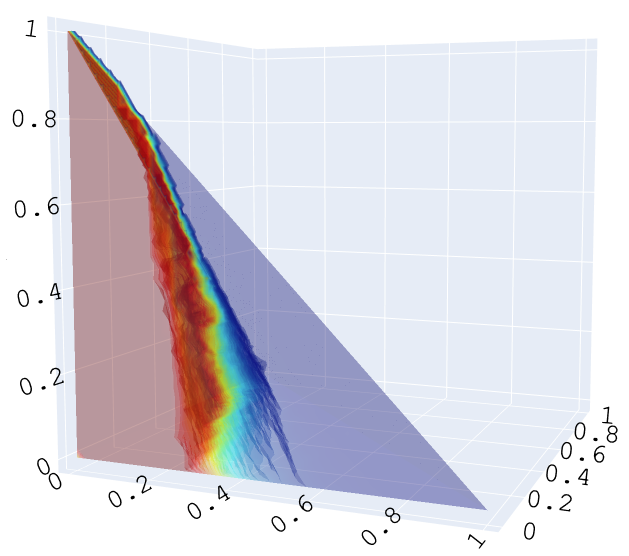}
\label{simp_spar_subfig3}}
\subfloat[Subfigure 4 list of figures text][$\Fdfive{4}$ vs. $\wh \Delta_{[40]}$]{
\includegraphics[width=0.42\textwidth]{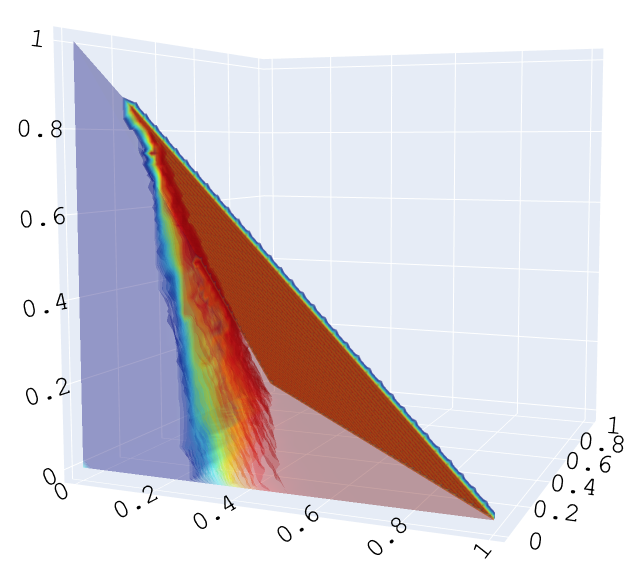}
\label{simp_spar_subfig4}}
\qquad
\centering
\includegraphics[width=0.5\textwidth]{images/colour_bar.png}
\caption{This figure shows the transparent three-dimensional plots of $\Fdfive{i}$ ($i \in [4]$) (in color) versus $(\lambda_2,\lambda_3,\lambda_4) \in \wh \Delta_{[40]}$, viewed from the facet: $\{(\lambda_2, \lambda_3, \lambda_4)  \in \wh \Delta:  \lambda_4 = 0 \}$.}
\label{spar_simp_fig}
\end{figure}

\begin{figure}[ht]
\centering 
    \subfloat[Subfigure 1 list of figures text][$f^{[40]}(w_\lambda)\in \Ffivestar{0.01}$ vs.\\ $(\lambda_2,\lambda_3,\lambda_4) \in \wh \Delta_{[40]}$ s.t. $w_\lambda \in \Wdkfives{0.01}$]{
	\includegraphics[scale=0.5]{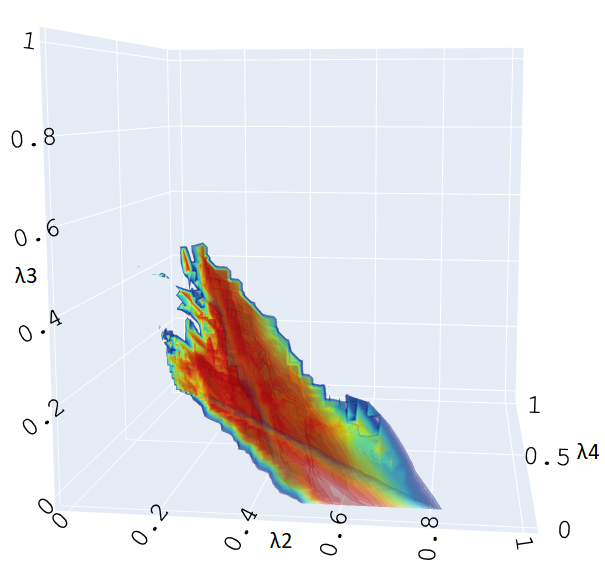}
	\label{spar_simp_superior_subfig1}}
	\subfloat[Subfigure 2 list of figures text][$\Ffivestar{0.01}$ vs. $\wh \Delta_{[40]} \cap \Lambda_{\Delta}$]{
	\includegraphics[scale=0.4]{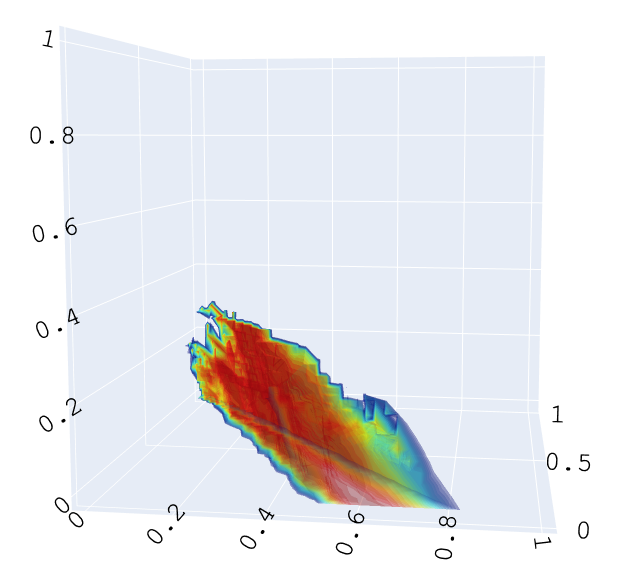}
	\label{spar_simp_superior_subfig2}}
	\qquad
	\centering
	\includegraphics[width=0.5\textwidth]{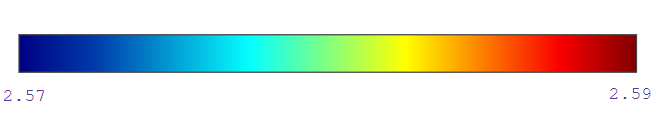}
	\caption{ This figure shows the transparent three-dimensional plot of $f^{[40]}(w_\lambda) \in \Ffivestar{0.01}$ in color versus $(\lambda_2,\lambda_3,\lambda_4) \in  \wh \Delta_{[40]}$ such that $w_\lambda \in \Wdkfives{0.01}$.} 
\label{F_Pareto_spar_simp_sum}
\end{figure}
\FloatBarrier
The similarity between \cref{spar_simp_fig} and its dense analog \cref{simp_fig} is because at least half of the portfolios $w_\lambda\in \Wdkfive$ are from $ \Wd$. Recall the histogram in \cref{supp_hist}, in which more than half of the points $ w_\lambda \in \Wd$ are shown to have support size five or less. Following the procedure of \cref{sub_sub_sec_spar_fista}, these portfolios are taken as they are when constructing $\Wdkfive$.

 Between  \cref{F_Pareto_spar_simp_sum} and \cref{F_Pareto_simp_sum}, there is again much similarity. The reader may wonder why the range of values in the sparse setting \cref{F_Pareto_spar_simp_sum} (from 2.57 to 2.59) exceeds that of dense setting \cref{F_Pareto_simp_sum} (from 2.455 to 2.475). There is no contradiction here because the scaling (\ref{scaling}) is different in each setting (simplex, cube, sparse, and dense). Hence, the values $\Ffivestar{0.01}$ and $\Fstar{0.01}$ are incomparable, similarly for the forthcoming cube setting.


\pagebreak
\subsubsection{Numerical results in the cube setting: $w \in [-1,1]^n$} \label{par_opt_cube}
\begin{figure}[!htbp]
\centering
\subfloat[Subfigure 1 list of figures text][ $\Fs{1}$ vs. $\wh \Delta_{[40]}$]{
\includegraphics[width=0.42\textwidth]{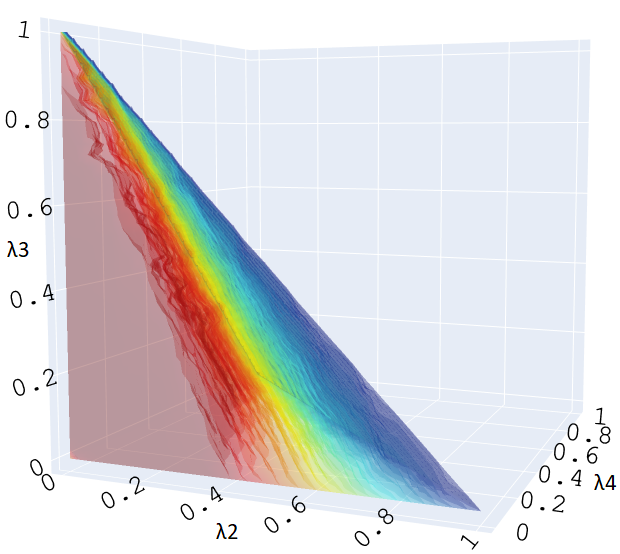}}
\subfloat[Subfigure 2 list of figures text][$\Fs{2}$ vs. $\wh \Delta_{[40]}$]{
\includegraphics[width=0.42\textwidth]{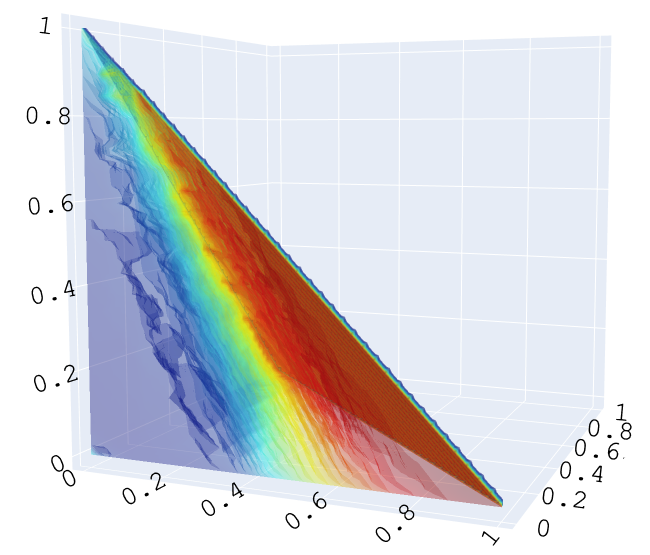}}
\qquad
\subfloat[Subfigure 3 list of figures text][$\Fs{3}$ vs. $\wh \Delta_{[40]}$]{
\includegraphics[width=0.42\textwidth]{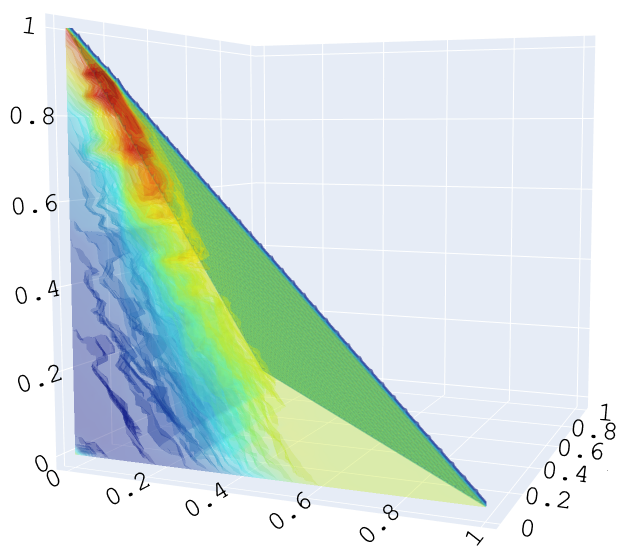}
\label{cube_skew}}
\subfloat[Subfigure 4 list of figures text][$\Fs{4}$ vs. $\wh \Delta_{[40]}$]{
\includegraphics[width=0.42\textwidth]{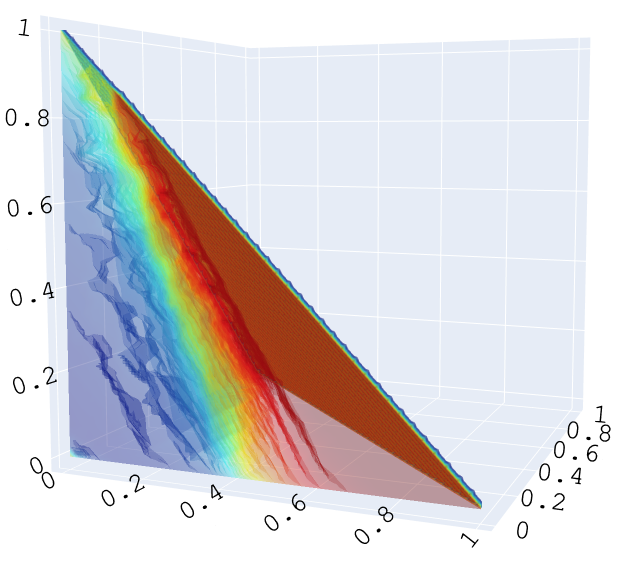}}
\qquad
\centering
\includegraphics[width=0.5\textwidth]{images/colour_bar.png}
\caption{This figure shows the transparent three-dimensional plots of $\Fis$ ($i \in [4]$) (in color) versus $(\lambda_2,\lambda_3,\lambda_4) \in \wh \Delta_{[40]}$, viewed from the facet: $\{(\lambda_2, \lambda_3, \lambda_4)  \in \wh \Delta:  \lambda_4 = 0 \}$.}
\label{cube_fig}
\end{figure}

\begin{figure}[!htbp]
\centering 
    \subfloat[Subfigure 1 list of figures text][$f^{[40]}(w_\lambda)\in \Fstar{0.025}$ vs. $(\lambda_2,\lambda_3,\lambda_4) \in \wh \Delta_{[40]}$ s.t. $w_\lambda \in \Wds{0.025}$]{
	\includegraphics[scale=0.53]{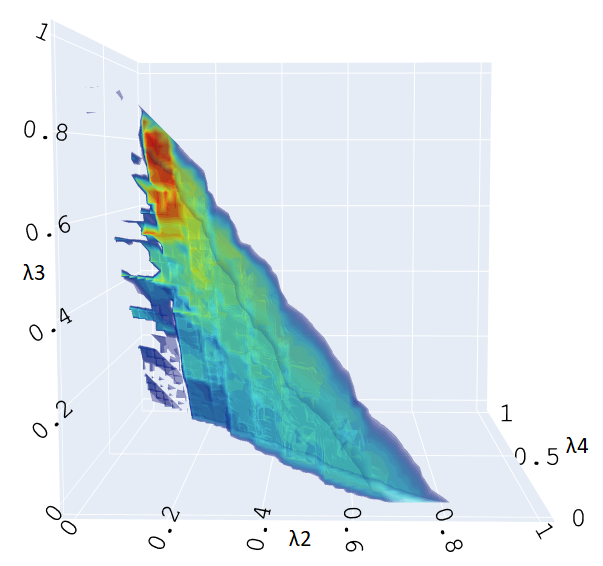}}
	\subfloat[Subfigure 2 list of figures text][$\Fstar{0.025}$ vs. $\wh \Delta_{[40]} \cap \Lambda_{\square}$]{
	\includegraphics[scale=0.4]{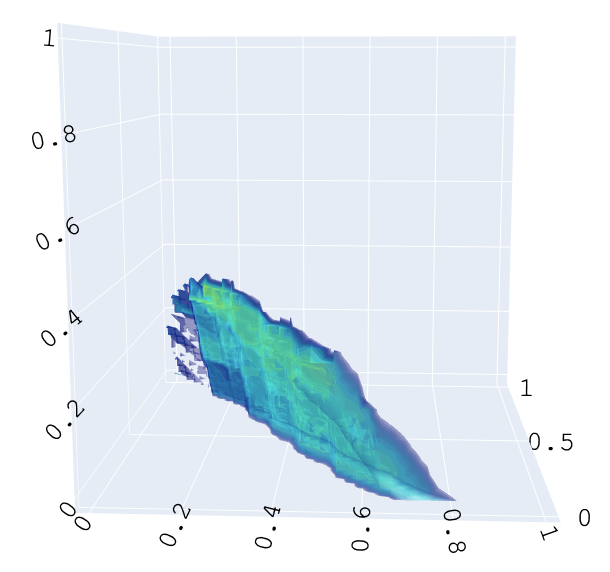}}
	\qquad
	\centering
	\includegraphics[width=0.5\textwidth]{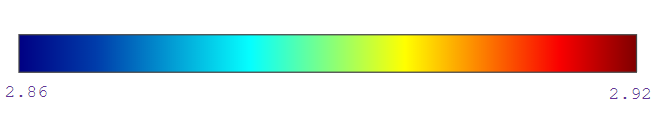}
	\caption{ This figure shows the transparent three-dimensional plot of $f^{[40]}(w_\lambda) \in \Fstar{0.025}$ in color versus $(\lambda_2,\lambda_3,\lambda_4) \in  \wh \Delta_{[40]}$ such that $w_\lambda \in \Wds{0.025}$.} 
\label{F_Pareto_cube_sum}
\end{figure}
\FloatBarrier
The cube setting differs significantly from the simplex setting. 
Portfolios $w_\lambda$ are now in $[-1,1]^n$ and have full support (at least for all examples we have computed).
Except for skewness, \cref{cube_skew}, the figures of \cref{cube_fig} follow roughly the same pattern as in \cref{simp_fig,spar_simp_fig}.
In the cube setting, portfolios $w_\lambda \in \Ws$ now require a large $\lambda_3$ to attain good values for the skewness objective, see \cref{cube_skew}. 

We observe that the portfolios of superior trade-off are more scarce in the cube setting than in the simplex counterpart.
Hence, in \cref{F_Pareto_cube_sum}, we now take $\eta=0.025$ because the set $\wh\Delta_\square^{[40],0.025}$ (of hyper-parameters corresponding to solutions of superior trade-off) gives a fuller and more informative plot than $\wh\Delta_\square^{[40],0.01}$. Secondly, we observe the same ``wedge" of superior portfolios we saw in \cref{F_Pareto_simp_sum,F_Pareto_spar_simp_sum}. Lastly, the portfolios $w_\lambda \in \Wss{0.025}$ that do the best in \cref{F_Pareto_cube_sum} have $\lambda_3 \geq 0.5$, with the concentration lying outside of $\Lambda_\square$.

\pagebreak
\subsubsection{Numerical results in the sparse cube setting: $w \in [-1,1]^n$, $|\supp(w)|\leq 5 $} \label{par_opt_cube_spar}

\begin{figure}[!htbp]
\centering
\subfloat[Subfigure 1 list of figures text][ $\Fsfive{1}$ vs. $\wh \Delta_{[40]}$]{
\includegraphics[width=0.42\textwidth]{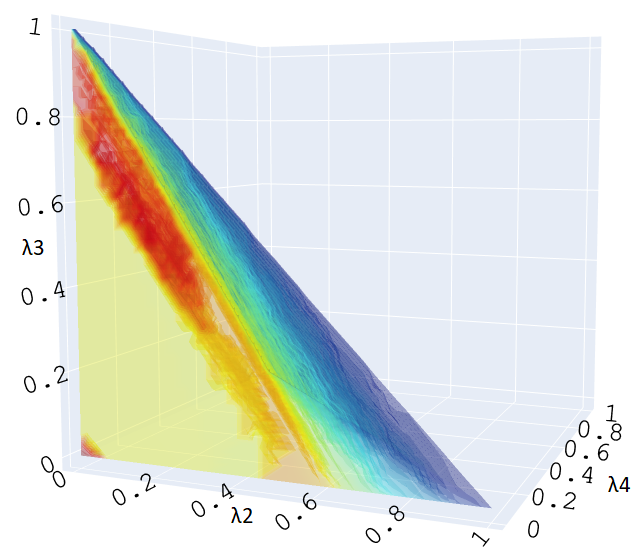}
\label{spar_cube_sub_1}}
\subfloat[Subfigure 2 list of figures text][$\Fsfive{2}$ vs. $\wh \Delta_{[40]}$]{
\includegraphics[width=0.42\textwidth]{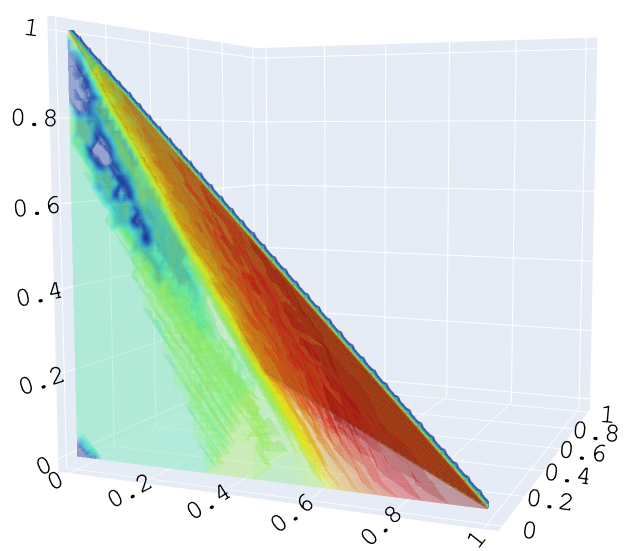}
\label{spar_cube_sub_2}}
\qquad
\subfloat[Subfigure 3 list of figures text][$\Fsfive{3}$ vs. $\wh \Delta_{[40]}$]{
\includegraphics[width=0.42\textwidth]{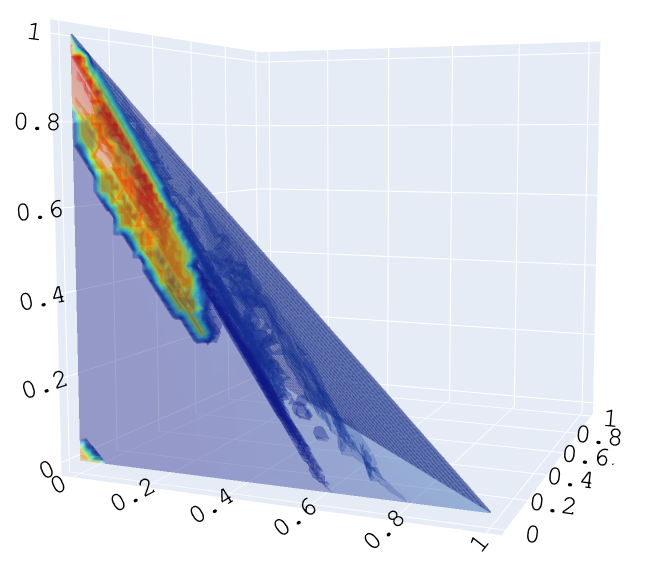}
\label{spar_cube_sub_3}}
\subfloat[Subfigure 4 list of figures text][$\Fsfive{4}$ vs. $\wh \Delta_{[40]}$]{
\includegraphics[width=0.42\textwidth]{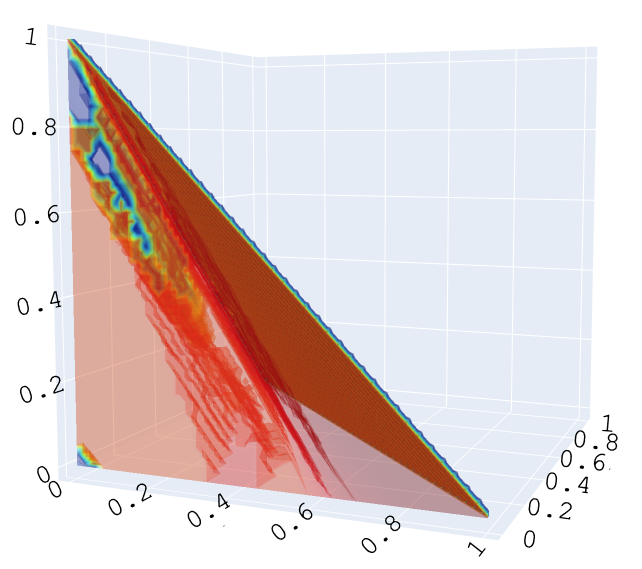}
\label{spar_cube_sub_4}}
\qquad
\centering
\includegraphics[width=0.5\textwidth]{images/colour_bar.png}
\caption{This figure shows the transparent three-dimensional plots of $\Fsfive{i}$ ($i \in [4]$) (in color) versus $(\lambda_2,\lambda_3,\lambda_4) \in \wh \Delta_{[40]}$, viewed from the facet: $\{(\lambda_2, \lambda_3, \lambda_4)  \in \wh \Delta:  \lambda_4 = 0 \}$.}
\label{sparse_cube_fig}
\end{figure}

\begin{figure}[!htbp]
\centering 
    \subfloat[Subfigure 1 list of figures text][$ \Fstar{0.025}$ vs. $\wh \Delta_{[40]}$]{
	\includegraphics[scale=0.5]{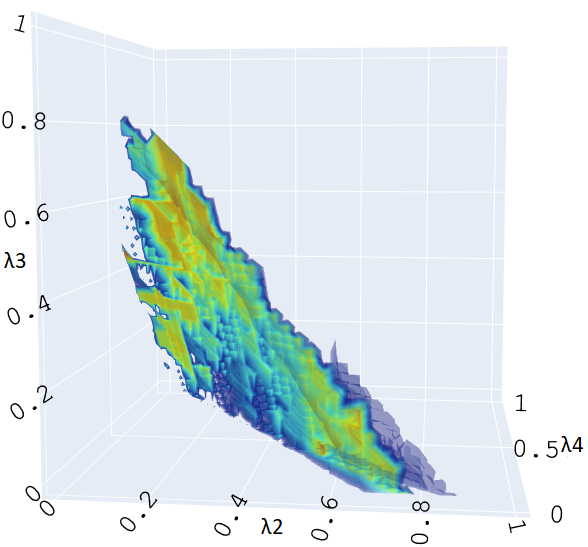}}
	\subfloat[Subfigure 2 list of figures text][$\Fstar{0.025}$ vs. $\wh \Delta_{[40]} \cap \Lambda_{\Delta}$]{
	\includegraphics[scale=0.40]{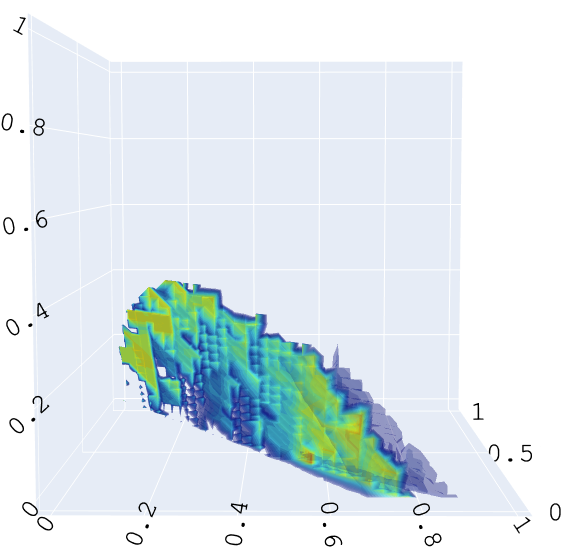}}
	\qquad
	\centering
	\includegraphics[width=0.5\textwidth]{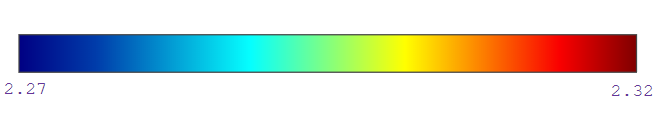}
	\caption{ This figure shows the transparent three-dimensional plot of $f^{[40]}(w_\lambda) \in \Ffivestars{0.025}$ in color versus $(\lambda_2,\lambda_3,\lambda_4) \in  \wh \Delta_{[40]}$ such that $w_\lambda \in \Wskfives{0.025}$.}
\label{F_Pareto_spar_cube_sum}
\end{figure}

\FloatBarrier

The results for the sparse cube setting differ vastly from the dense cube setting.
The difference in results is primarily due to the portfolios $w_\lambda \in \Ws$ having full support and thus differing greatly from the portfolios  $w_\lambda \in \Wskfive$.
In particular, we see concentrations forming in the same places as in \cref{cube_skew}, namely the upper tip of $\wh \Delta_{[40]}$ where $\lambda_3$ is large.
We also see a tinny concentration near $\lambda_1 = 1$. 
Despite the changes we still have that the odd objectives (mean and skewness) perform better in regions where the even objectives (variance and kurtosis) do poorly and vice versa, see \cref{sparse_cube_fig}.
Surprisingly, $\wh\Delta_{\square,5}^{[40],0.025}$ in \cref{F_Pareto_spar_cube_sum} is again the same ``wedge"-like shape we have seen in \cref{F_Pareto_simp_sum,F_Pareto_spar_simp_sum,F_Pareto_cube_sum}. There are now hardly any red regions, indicating that very few points reach the higher value range.

\section{Conclusion}\label{sec_conc}
In this paper, we considered the multi-objective optimization problem MVSK that models the portfolio selection problem in finance.
We considered two settings.
The first was the simplex ($w \in \Delta^n$), which represents portfolios that do not allow short selling and leverage.
The second setting was the cube ($w \in [-1,1]^n$), where we allow leverage and short selling.
Refer to \cref{sub_lev_short} to refresh the notions of leverage and short selling.  
Furthermore, we introduced a sparse variant of MVSK, where one can set an upper bound on the support size of portfolios.\\

In order to (partially) recover the Pareto front of MVSK (for the different settings), we proposed the following three-step process.
First, we considered the linear scalarization of MVSK and identified a set of hyper-parameters for which the resulting scalar-objective problems are convex.
Second, we used the fast iterative shrinkage-thresholding algorithm (FISTA) (which converges to a global optimizer when applied to convex problems) to recover optimizers of the scalarized problems.
Third, we used the fact that the (global) optimizers of neat scalarizations are Pareto optimizers of the original multi-objective problem, hence, showing that many of the optimizers computed by FISTA are Pareto optimal for MVSK.
 
Additionally, we demonstrate that gradient-descent algorithms like FISTA have three desirable properties in this setting: the ability to benefit from warm starts, the tendency to generate sparse solutions (due to the projection step, in the case of the simplex), and guaranteed convergence to the global optimizer (for convex problems).
We hope the visualizations accompanying our computations will further intuition and understanding within this exciting field.\\ 

In the introduction, we hinted at extending the model to include even higher moments, i.e., beyond kurtosis.
The formulation of the objective functions in (\ref{mom_funs_k}) is well-defined for any integer $k>4$.
Hence, one can define objectives $f_k$ with $k>4$ in addition to those already present in (\ref{MVSK}) and thereby extend the model. With an extended model, one can again scalarize linearly, now using more hyper-parameters than before. Again one can characterize the Hessian of this new scalarization and possibly recover results similar to \cref{pos_coef,cor_convex}. 
However, one must first justify adding these higher moments, considering the additional computational and complexity costs involved.
There is not much motivation in the literature for this further extension.
Some authors even advocate against relying on correlation-based risk measures \cite{taleb2020statistical}.\\

\noindent Alternative to the linear scalarization (\ref{F_lambda}) some authors propose the Minkowski distance scalarization (\ref{minkow_scal}). The Minkowski distance formulation lends itself to a signomial optimization interpretation \cite{Chandrasekaran}.
Indeed, the scalarization (\ref{minkow_scal}) applied to the MVSK problem with simplex domain has the following form:
$$
\min_{w\in \Delta^n} \sum_{k \in [4]}\big| f_k(w) -f_k^* \big|^{\lambda_k} ,
$$
where $f_k^* :=  \min_{w \in \Delta} (-1)^{k} f_k(w)$ with $f_k$ given in (\ref{mom_funs}) and (\ref{mean_fun}) for each $k \in [4]$. Under the change of variable $\exp(u) :=  (\exp(u_i))_{i \in [n]} := (w_i)_{i \in [n]}$, the above problem can be written as a signomial optimization program
\begin{equation*}  
\begin{split} 
\min & \sum_{k \in [4]} \exp(\lambda_kv_k)\\
\mathrm{s.t.} & \sum_{i \in [n]} \exp(u_i) = 1 \\
& \exp(v_k) =  (-1)^{k}f_k(\exp(u)) + (-1)^{k+1}f_k^*    ~(k \in [4]).
\end{split}
\end{equation*} 
Problems of this type have been studied before and have mature methods to solve them, see \cite{murraySignomialPolynomialOptimization2021}.
Approaching the MVSK problem from the signomial programming direction opens a new and unexplored line of inquiry into the MVSK problem.
As before, one can try to characterize the convexity of such a scalarization in the hopes of getting similar results to \cref{pos_coef}, but we do not attempt it here.
The appeal of solving these scalarizations is that they are also neat for $\lambda >0$ and could reveal Pareto points that the linear scalarization approach cannot. However, the benefits of the Minkowski distance scalarization must be weighed against the fact that it is much harder to interpret than linear scalarization. Moreover, one has to compute the independent optima $f_k^*$ for $k\in [4]$, which can already be challenging in the case of $k=3$.

\bibliographystyle{abbrv} 
\bibliography{MVSK.bib}

\end{document}